\newtheorem{thm}{Theorem}
\newtheorem{prop}{Proposition}
\newtheorem{cor}{Corollary}
\newtheorem{definition}{Definition}
\newtheorem{Lemma}{Lemma}
\newtheorem{property}{Property}
\newcommand{\titlehead}[1]{}      
\newcommand{\address}[1]{}        
\newcommand{\subject}[1]{}        
\newcommand{\keywords}[1]{}       
\newcommand{\corres}[1]{}         
\newcommand{\email}[1]{#1}        
\begin{document}

\title{Novel Inconsistency Results for Partial Information Decomposition}

\author[1]{Philip Hendrik Matthias}
\author[1]{Abdullah Makkeh}
\author[1]{Michael Wibral}
\author[1]{Aaron J. Gutknecht\thanks{\texttt{agutkne@uni-goettingen.de}}}

\affil[1]{Campus Institute for Dynamics of Biological Networks,
Georg-August University Göttingen, Germany}

\address{$^{1}$Campus Institute for Dynamics of Biological Networks, Georg-August University Göttingen, Germany}

\subject{information theory, complex systems, partial information decomposition}

\keywords{partial information decomposition, inconsistencies, axioms, redundancy, union information, synergy}

\corres{Aaron J. Gutknecht \\
\email{agutkne@uni-goettingen.de}}

\date{}

\maketitle

\begin{abstract}
Partial Information Decomposition (PID) seeks to disentangle how information about a target variable is distributed across multiple sources, separating redundant, unique, and synergistic contributions. Despite extensive theoretical development and applications across diverse fields, the search for a unique, universally accepted solution remains elusive, with numerous competing proposals offering different decompositions. A promising but underutilized strategy for making progress is to establish inconsistency results, proofs that certain combinations of intuitively appealing axioms cannot be simultaneously satisfied. Such results clarify the landscape of possibilities and force us to recognize where fundamental choices must be made. In this work, we leverage the recently developed mereological approach to PID to establish novel inconsistency results with far-reaching implications. Our main theorem demonstrates that three cornerstone properties of classical information theory—non-negativity, the chain rule, and invariance under invertible transformations—become mutually incompatible when extended to the PID setting. This result reveals that any PID framework must sacrifice at least one property that seems fundamental to information theory itself. Additionally, we strengthen the classical result of Rauh et al., which showed that non-negativity, the identity property, and the Williams and Beer axioms cannot coexist. 
\end{abstract}

\section{Introduction}
Partial Information Decomposition (PID) aims to disentangle how information about a target variable is distributed across multiple sources, addressing the fundamental question of "who knows what?" about the target. This question emerges across virtually every scientific domain: from understanding how brain regions jointly encode stimuli, to determining which features in a dataset contribute to a prediction, to assessing how information is distributed in quantum systems. The ubiquity of this question explains the breadth of PID's applications, spanning neuroscience \cite{wibral2017partial, varley2023partial, luppi2022synergistic}, machine learning \cite{shwartz2024compress, makkeh2025general, wollstadt2023rigorous, liang2023quantifying, schneider2025neuron, dewan2024diffusion}, quantum information \cite{vanenk2023quantum} and algorithmic fairness \cite{dutta2023review,dutta2021fairness, Hamman2023}. 

Since the seminal work of Williams and Beer \cite{williams2010nonnegative}, researchers have proposed numerous measures for quantifying how information decomposes into redundant, unique, and synergistic components  (e.g.,\cite{Harder2013, bertschinger2014quantifying, ince2017measuring, finn2018pointwise, rosas2020operational, makkeh2021introducing, Kolchinsky2022, gomes2024measure, Griffith2014}). A central challenge in this endeavor has been identifying which properties such measures should satisfy. The literature has seen the proposal of many intuitively appealing properties (such as non-negativity of the components, certain monotonicity conditions, chain rules, etc.), yet no consensus has emerged on which combination of properties is essential \cite{rauh2023continuity, rauh2014reconsidering, Harder2013, finn2018pointwise}.

Adding to this challenge, it was discovered that certain property combinations are fundamentally incompatible. Bertschinger et al.\cite{Bertschinger2013_shared_info},  Rauh et al.\cite{rauh2014reconsidering}, and Finn \& Lizier \cite{finn2018pointwise} proved \emph{inconsistency results} showing that some properties cannot be satisfied simultaneously, no matter how a measure is constructed. At first glance, this may seem to make matters worse. However, such findings serve an important purpose: much like the impossibility of squaring the circle in classical geometry, they rule out unattainable goals and allow researchers to focus their efforts on property combinations that are actually achievable.

Consider a researcher who encounters a PID measure satisfying a highly valued property (perhaps non-negativity of all components) but exhibiting an undesirable feature (perhaps failure to satisfy a chain rule). A natural question arises: can the measure be modified to preserve the desirable property while eliminating the problematic one? If these properties are provably inconsistent, the answer is immediate: such a modification is impossible, and one must either accept the tradeoff or pursue an entirely different approach. Impossibility theorems have played similar roles in other fields: for instance, Arrow's theorem in social choice theory \cite{Arrow1951} proved that no ranked voting system can simultaneously satisfy a set of seemingly reasonable conditions, while recent work in algorithmic fairness \cite{algo_fairness_tradeoff} has demonstrated that different statistical notions of fairness are mutually exclusive. 

The inconsistency results established in this paper address some of the most fundamental desiderata for a PID. Our main theorem reveals that three properties—each a direct extension of core principles of classical information theory—cannot all be satisfied simultaneously: the \textit{target chain rule}, \textit{non-negativity}, and \textit{invariance under invertible transformations}. This means that the incompatibility arises not from controversial PID-specific assumptions like the \textit{identity property} \cite{Harder2013, ince2017measuring,james2018unique, finn2018pointwise}, but from straightforward extensions of well-established information-theoretic principles. This demonstrates that in extending information theory to the PID setting, we enter a conceptually different realm where not all familiar properties can be preserved. 

Beyond establishing this central impossibility, we revisit the inconsistency theorem of Rauh et al. \cite{rauh2014reconsidering}, which showed that the "Williams and Beer axioms" together with non-negativity and the identity property cannot be jointly satisfied. We strengthen this result by proving that the source monotonicity condition used in their proof is not required, and by making explicit the role of transformation invariance, which was implicitly assumed but not stated. Finally, we show that the inconsistency results are not tied to formulating the target chain rule and the identity property specifically for redundant information. Instead, these properties admit equivalent formulations for alternative information concepts that can be used to construct a PID, and the same incompatibilities arise under these reformulations.

\section{Background}

We first set out the basic notation that will be used throughout: We consider a collection of jointly distributed discrete random variables $\mathbf{S} = (S_1, \ldots, S_n)$ and $T$, where the $S_i$ are referred to as the sources, and $T$ is termed the target. We denote subsets of source variables by their indices. For example, the subset $\{S_1, S_2\}$ is represented simply as $\{1, 2\}$. Generic subsets of source indices are denoted by bold lowercase letters, such as $\mathbf{a}$ and $\mathbf{b}$, where $\mathbf{a}, \mathbf{b} \subseteq [n] := \{1, \ldots, n\}$. The mutual information provided by a specific subset $\mathbf{a}$ of the source variables about the target is denoted by $I(\mathbf{a}; T) := I\left( \{ S_i \}_{i \in \mathbf{a}}; T \right)$. The powerset of $[n]$ is written as $\mathcal{P}([n])$. \textit{Antichains} of the partial order $(\mathcal{P}([n]), \subseteq)$, i.e. sets $\{\mathbf{a}_1, \ldots,\mathbf{a}_m\}$ such that no $\mathbf{a}_i$ is a superset of $\mathbf{a}_j$ for $i\neq j$, are denoted by greek letters $\alpha, \beta$. The set of antichains excluding $\{\}$ and $\{\{\}\}$ is denoted by $\mathcal{A}_n$.

\subsection{Partial Information Decomposition} \label{sec:background:pid}
We adopt the formal definition of PID introduced in \cite{gutknecht2021bits, gutknecht2025babel}, to which we refer readers for additional details. To motivate the definition, consider first the case $n=2$. The four \textit{information atoms}---redundancy, unique information from $S_1$, unique information from $S_2$, and synergy---are characterized by their distinctive \textit{ parthood relationships} to the mutual information terms $I(S_1;T)$, $I(S_2;T)$, and $I(S_1,S_2;T)$ as shown in the following table:

\begin{table}[ht]
\centering
\begin{tabular}{|c|c|c|c|c|} 
\hline
$\emptyset$ & $\{1\}$ & $\{2\}$ & $\{1,2\}$ & Interpretation\\ 
\hline
0 & 1 & 1 & 1 & redundancy \\ 
0 & 1 & 0 & 1 & unique $S_1$ \\ 
0 & 0 & 1 & 1 & unique $S_2$ \\ 
0 & 0 & 0 & 1 & synergy \\ 
\hline
\end{tabular}
\caption{Parthood distributions for $n=2$ sources showing the four information atoms and their interpretations.}
\label{tab:parthood}
\end{table}

Each row represents a possible configuration of parthood relationships: a 1 in column ${\mathbf{a} \subseteq [n]}$ means the corresponding atom is part of $I(\mathbf{a};T)$, while 0 means it is not. For instance, the synergy is only part of the joint mutual information $I(S_1,S_2;T)$. This observation suggests a natural generalization to arbitrary $n$: enumerate all possible such configurations (corresponding to a row in the table) and associate with each one an information atom quantifying precisely the information standing in those parthood relationships.

To formalize this, we introduce the concept of \textit{parthood distributions}, Boolean functions that encode these configurations:

\begin{definition}[Parthood Distribution]
A \textit{parthood distribution} in the context of $n$ source variables is a function $f:\mathcal{P}([n]) \rightarrow \{0,1\}$ such that
\begin{enumerate}[label=B\arabic*]
	\item \label{axiom:parthood1} $f(\emptyset) = 0$ (no information in the empty set)
	\item \label{axiom:parthood2} $f(\{1,\ldots,n\}) = 1$ (all information is in the full set)
	\item \label{axiom:parthood3} $(\mathbf{a} \subseteq \mathbf{b} \text{ and } f(\mathbf{a})=1) \implies f(\mathbf{b})=1$ (monotonicity)
\end{enumerate}
\end{definition}
A parthood distribution describes a row in a table analogous to Table \ref{tab:parthood} above. The set of all parthood distributions for $n$ sources is denoted $\mathcal{B}_n$. In a PID, there is exactly one information atom $\Pi(f)$ for each parthood distribution $f \in \mathcal{B}_n$~\cite{gutknecht2021bits}. The atom $\Pi(f)$ quantifies precisely the information that is part of $I(\mathbf{a};T)$ for all $\mathbf{a}$ with $f(\mathbf{a})=1$, and not part of $I(\mathbf{a};T)$ for all $\mathbf{a}$ with $f(\mathbf{a})=0$.

This leads naturally to the following definition:

\begin{definition}[Partial Information Decomposition]\label{def:PID}
A partial information decomposition is a family of real-valued functionals $\{\Pi(f;T)\}_{f \in \mathcal{B}_n}$ of the joint distribution of $(\mathbf{S},T)$ satisfying the consistency equations
\begin{equation}\label{eq:consistency}
I(\mathbf{a};T) = \sum_{\substack{f \in \mathcal{B}_n \\ f(\mathbf{a})= 1 }} \Pi(f;T)
\quad \text{for all } \mathbf{a} \subseteq [n].
\end{equation}
\end{definition}

The consistency equations simply state that each mutual information term equals the sum of all atoms that are part of it.

Note that Definition~\ref{def:PID} encompasses the original Williams--Beer framework as a special case (see Proposition~2 in \cite{gutknecht2025babel}). Williams and Beer introduced the notion of a redundancy function based on certain axioms for redundant information, and then defined the PID atoms indirectly via a M\"obius inversion over the so called \textit{redundancy lattice}. The key advantage of our alternative characterization is that it introduces the information atoms directly (the quantities of ultimate interest) rather than proceeding indirectly via a redundancy measure. In particular, this clarifies that redundancy plays no privileged role in PID theory: other notions such as union information or synergistic information can equally serve as PID-inducing concepts, and the relationships between these different approaches to the PID problem become transparent. This is discussed further in the next subsection.

In addition to PID we also introduce the notion of a \textit{conditional PID} which will be important in the context of our main results:
\begin{definition}[Conditional Information Atoms] \label{def:cond_atoms}
Let $(\mathbf{S},T,Z)$ be jointly distributed random variables, with $Z$ an external variable or part of $\mathbf{S}$ or $T$. For any $f \in \mathcal{B}_n$, the conditional information atom is defined by

\begin{align}
\Pi(f;T \mid Z) = \sum_{z} p(z) \Pi^{p(\mathbf{S},T \mid Z=z)}(f;T).
\end{align}
where the superscript $p(\mathbf{S},T \mid Z=z)$ indicates that we are evaluating the same functional $\Pi(f;T)$ under the conditional distribution of $(\mathbf{S},T)$ given $Z=z$.

\end{definition}
This definition parallels the standard construction of conditional mutual information. In fact, in Fano's classical exposition of information theory \cite{Fano1961} he made it an axiom that however "information" be defined, "conditional information" should be the same functional evaluated under the conditional distribution. 

As a final remark we note that in the PID literature, it is common to index information atoms by antichains $\alpha \in \mathcal{A}_n$ rather than by parthood distributions $f \in \mathcal{B}_n$, writing $\Pi(\alpha;T)$ instead of $\Pi(f;T)$. These two notations are equivalent via a natural bijection~\cite{gutknecht2021bits}: each parthood distribution $f$ corresponds to the antichain $\alpha_f$ consisting of the \textit{minimal} sets $\mathbf{a}$ for which $f(\mathbf{a})=1$. Conversely, each antichain $\alpha$ corresponds to the parthood distribution $f_\alpha$ that assigns 1 to all supersets of elements in $\alpha$. Formally, 
\begin{equation} 
f \mapsto \alpha_f := \min_{\subseteq}\{\mathbf{a} \subseteq [n] \mid f(\mathbf{a})=1\} 
\end{equation}
with inverse 
\begin{equation} 
\alpha \mapsto f_\alpha(\mathbf{b}) := \begin{cases} 1 & \text{if } \exists \mathbf{a} \in \alpha: \mathbf{b} \supseteq \mathbf{a} \\ 0 & \text{otherwise} \end{cases} 
\end{equation} 
For instance, in Table~\ref{tab:parthood}, the redundancy atom corresponds to the antichain $\{\{1\},\{2\}\}$ (since $\{1\}$ and $\{2\}$ are the minimal sets with value 1), while the synergy atom corresponds to $\{\{1,2\}\}$. We will make use of the antichain notation when referring to \textit{specific} atoms in the context of a given number $n$ of source variables. Here it becomes cumbersome to write out the full parthood distribution with its $2^n$ entries. The parthood distribution notation by contrast is particularly useful for general definitions and derivations for arbitrary $n$.

\subsection{PID-inducing Concepts}\label{sec:background:pid_inducing}

The consistency equations \eqref{eq:consistency} do not uniquely determine a solution to the PID problem. This is because the number of atoms equals the $n$th Dedekind number minus two, whereas the number of consistency equations is only $2^n - 1$. To resolve this underdetermination, one introduces the notion of \textit{PID-inducing concepts}. A PID-inducing concept is a family of information measures that, once specified, makes it possible to uniquely solve for all information atoms. The most famous example of a PID-inducing concept is \textit{redundant information}, which can be formally introduced as follows.

\begin{definition}[Redundant Information]\label{def:redundancy}
A real-valued functional $I_{\cap}(\mathbf{a}_1,\ldots,\mathbf{a}_m;T)$ of the joint distribution of $(\mathbf{S},T)$ is called a measure of redundant information if there exists a PID $\{\Pi(f;T)\}_{f \in \mathcal{B}_n}$ such that
\begin{equation}\label{eq:C_red}
I_{\cap}(\mathbf{a}_1,\ldots,\mathbf{a}_m;T) = \sum\limits_{\substack{\alpha \in \mathcal{A}_n \\ \mathcal{C}_\mathrm{red}(\mathbf{a}_1,\ldots \mathbf{a}_m;f)}} \Pi(f;T) \quad \text{ for all } \mathbf{a}_i \subseteq [n], m \in \mathbb{N}
\end{equation}
where $\mathcal{C}_\mathrm{red}$ is a logical condition on parthood distributions given by
\begin{equation}
\mathcal{C}_\mathrm{red}(\mathbf{a}_1,\ldots \mathbf{a}_m;f) \Leftrightarrow \forall i \in [m]: f(\mathbf{a}_i) = 1
\end{equation}
\end{definition}
The definition is motivated by the simple idea that the redundant information between source collections $\mathbf{a}_1,\ldots,\mathbf{a}_m$ should consist exactly of those atoms that are part of \textit{all} of the mutual information terms $I(\mathbf{a}_i;T)$. But these are all atoms $\Pi(f)$ such that $f(\mathbf{a})=1$ for all $\mathbf{a}_i$. This constraint on the parthood distributions is specified in the logical condition $\mathcal{C}_\mathrm{red}$.

The definition of redundancy immediately implies a range of basic properties.
\begin{prop}\label{prop:implied}
    Any measure of redundant information satisfies the following properties:
    \begin{enumerate}
    \item $I_\cap(\mathbf{a}_1,\ldots,\mathbf{a}_m;T) = I_\cap(\mathbf{a}_{\sigma(1)},\ldots,\mathbf{a}_{\sigma(m)};T)$ for any permutation $\sigma$ (symmetry)
    \item $I_\cap(\mathbf{a};T) = I(\mathbf{a};T)$ (self-redundancy)
    \item If $\mathbf{a}_i \supseteq \mathbf{a}_j$ for $i\neq j$, then $I_\cap(\mathbf{a}_1,\ldots, \mathbf{a}_m;T) = I_\cap(\mathbf{a}_1,\ldots, \mathbf{a}_{i-1}, \mathbf{a}_{i+1},\ldots,\mathbf{a}_m;T)$ (superset invariance)
\end{enumerate}
\end{prop}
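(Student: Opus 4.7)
The plan is to derive each of the three properties directly from the defining sum in~\eqref{eq:C_red} by showing that, in each case, the set of parthood distributions $f \in \mathcal{B}_n$ satisfying the logical condition $\mathcal{C}_\mathrm{red}$ is unchanged under the claimed manipulation of the arguments. Since the atoms $\Pi(f;T)$ are a fixed family (determined by the underlying PID), equality of the index sets immediately yields equality of the sums. This reduces the proposition to three short logical observations about the condition $\mathcal{C}_\mathrm{red}(\mathbf{a}_1,\ldots,\mathbf{a}_m;f) \Leftrightarrow \forall i \in [m]: f(\mathbf{a}_i)=1$.

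For symmetry, I would note that $\mathcal{C}_\mathrm{red}$ is a conjunction over the indices $i \in [m]$, and any conjunction is invariant under permutations of its conjuncts; hence the index set on the right of~\eqref{eq:C_red} is the same whether one writes $\mathbf{a}_1,\ldots,\mathbf{a}_m$ or $\mathbf{a}_{\sigma(1)},\ldots,\mathbf{a}_{\sigma(m)}$. For self-redundancy, I would specialize to $m=1$: the condition becomes simply $f(\mathbf{a})=1$, and the resulting sum $\sum_{f:\,f(\mathbf{a})=1}\Pi(f;T)$ is exactly the right-hand side of the consistency equation~\eqref{eq:consistency} for the subset $\mathbf{a}$, which equals $I(\mathbf{a};T)$ by Definition~\ref{def:PID}.

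For superset invariance, suppose $\mathbf{a}_i \supseteq \mathbf{a}_j$ with $i \neq j$. By the monotonicity axiom~\ref{axiom:parthood3} for parthood distributions, $f(\mathbf{a}_j)=1$ implies $f(\mathbf{a}_i)=1$. Therefore the conjunction $\bigwedge_{k \in [m]} f(\mathbf{a}_k)=1$ is logically equivalent to the same conjunction with the conjunct $f(\mathbf{a}_i)=1$ removed, since that conjunct is already forced by the remaining ones. The two index sets in~\eqref{eq:C_red} thus coincide, and the corresponding sums are equal.

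I do not anticipate any genuine obstacle: the proposition is essentially a syntactic consequence of the definition of $\mathcal{C}_\mathrm{red}$ together with the monotonicity axiom B3. If anything, the only subtle point worth flagging in the write-up is that the proof never uses the existence of a \emph{non-negative} or \emph{non-trivial} PID---it uses only that \emph{some} family $\{\Pi(f;T)\}_{f \in \mathcal{B}_n}$ satisfying~\eqref{eq:consistency} exists, which is guaranteed by the hypothesis that $I_\cap$ is a measure of redundant information in the sense of Definition~\ref{def:redundancy}.
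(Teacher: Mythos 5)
Your proof is correct and is the natural direct argument: each of the three properties follows from showing that the index set $\{f \in \mathcal{B}_n : \mathcal{C}_\mathrm{red}(\mathbf{a}_1,\ldots,\mathbf{a}_m;f)\}$ in~\eqref{eq:C_red} is unchanged (using commutativity of conjunction for symmetry, the consistency equations~\eqref{eq:consistency} for self-redundancy, and the monotonicity axiom~\ref{axiom:parthood3} for superset invariance). The paper itself does not spell this out but simply defers to Proposition~2.10 of \cite{gutknecht2025babel}, where essentially the same reasoning is given, so your write-up is a faithful, self-contained version of the intended proof; your closing remark that only the \emph{existence} of a witnessing PID is used (not non-negativity) is accurate and consistent with the paper's broader framing.
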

\begin{proof}
See proof of Proposition 2.10 in \cite{gutknecht2025babel}.
\end{proof}
Note that these properties coincide almost exactly with the so called \textit{Williams and Beer axioms} for redundancy functions \cite{williams2011information, williams2010nonnegative}. The only difference is that the \textit{superset invariance} property corresponds only to the equality case of their monotonicity axiom. Redundancy measures according to Definition~\ref{def:redundancy} are \textit{not} required to satisfy the  inequality condition  
\begin{equation}
I_\cap(\mathbf{a}_1,\ldots,\mathbf{a}_m,\mathbf{a}_{m+1};T) \;\leq\; I_\cap(\mathbf{a}_1,\ldots,\mathbf{a}_m;T),
\end{equation}
which demands that adding a source collection to the argument cannot increase redundancy. This makes our definition of redundant information (and PID), slightly broader than that of Williams and Beer. The broader formulation is useful, as it encompasses approaches in the literature that do not satisfy source monotonicity (e.g. \cite{ince2017measuring, makkeh2021introducing}). Including such approaches is sensible, since there are at least debatable reasons why monotonicity might fail (in particular the possibility of misinformation as discussed in Section \ref{sec:discussion}). In any case, since this paper is concerned with incompatibilities between PID properties, adopting a wider definition can only strengthen the results. 

We can study the notion of PID-inducing concepts more generally in terms of what we call $\mathcal{C}$-information.

\begin{definition}[$\mathcal{C}$-Information]\label{def:c_info}
Let $\mathcal{C}(\mathbf{a}_1,\ldots,\mathbf{a}_m;f)$ be a logical condition that specifies a relationship between input sets $\mathbf{a}_1,\ldots,\mathbf{a}_m$  and parthood distributions $f \in \mathcal{B}_n$. A real-valued functional $I^\mathcal{C}(\mathbf{a}_1,\ldots,\mathbf{a}_m; T)$ of the joint distribution of $(\mathbf{S}, T)$ is called a measure of $\mathcal{C}$-information if and only if there exists a PID $\{\Pi(f;T)\}_{f \in \mathcal{B}_n}$ such that:
\begin{equation}\label{eq:atoms_c_info}
I^\mathcal{C}(\mathbf{a}_1,\ldots,\mathbf{a}_m; T) = \sum\limits_{\substack{f\in \mathcal{B}_n \\ \mathcal{C}(\mathbf{a}_1,\ldots,\mathbf{a}_m;f)}} \Pi(f;T).
\end{equation}
\end{definition}
A measure of $\mathcal{C}$-information is called \textit{PID-inducing} if and only if the system of linear equations \eqref{eq:atoms_c_info} is invertible. Apart from redundant information we would like to highlight three further PID-inducing concepts because of their particularly simple and interpretable structure. We include these alternatives here because, as we show below, the inconsistency results can be formulated in terms of any of them and are therefore not specific to redundant information.
\begin{definition}[Union Information] \label{def:union}
A real-valued functional $I_{\cup}(\mathbf{a}_1,\ldots,\mathbf{a}_m;T)$ of the joint distribution of $(\mathbf{S},T)$ is called a measure of union information if it is a measure of $\mathcal{C}$-information with respect to the condition
\begin{equation}
\mathcal{C}_{union}(\mathbf{a}_1,\ldots,\mathbf{a}_m;f) \Leftrightarrow \exists i \in [m]: f(\mathbf{a}_i) = 1
\end{equation}
\end{definition}
This measures all information which  \textit{can be obtained from at least one} of the input source collections $\mathbf{a}_1,\ldots, \mathbf{a}_m$.
\begin{definition}[Weak Synergy] \label{def:ws}
A real-valued functional $I_{\cup}(\mathbf{a}_1,\ldots,\mathbf{a}_m;T)$ of the joint distribution of $(\mathbf{S},T)$ is called a measure of weak synergy  if it is a measure of $\mathcal{C}$-information with respect to the condition
\begin{equation}
\mathcal{C}_{ws}(\mathbf{a}_1,\ldots,\mathbf{a}_m;f) \Leftrightarrow \forall i \in [m]: f(\mathbf{a}_i) = 0
\end{equation}
\end{definition}
This measures all information which \textit{cannot be obtained from any} individual input source collection among the $\mathbf{a}_1,\ldots, \mathbf{a}_m$ and is hence synergistic with respect to these collections. The meaning of the qualifier "weak" comes from the fact that it is the weakest condition out of multiple plausible conditions for describing synergistic information. For more details see \cite{gutknecht2021bits, gutknecht2025babel}. 
\begin{definition}[Vulnerable Information] \label{def:vul}
A real-valued functional $I_{\cup}(\mathbf{a}_1,\ldots,\mathbf{a}_m;T)$ of the joint distribution of $(\mathbf{S},T)$ is called a measure of vulnerable information if it is a measure of $\mathcal{C}$-information with respect to the condition
\begin{equation}
\mathcal{C}_{vul}(\mathbf{a}_1,\ldots,\mathbf{a}_m;f) \Leftrightarrow \exists i \in [m]: f(\mathbf{a}_i)=0
\end{equation}
\end{definition}
This measures all information which \textit{cannot be obtained from at least one} individual input source collection among the $\mathbf{a}_1,\ldots, \mathbf{a}_m$. It is therefore vulnerable to loss: if we lose access to certain sources and are left with only a collection that lacks this information, then the information is lost.

We now turn to the notions of \textit{domain} and \textit{order relation} of concepts of $\mathcal{C}$-information (for full technical details see \cite{gutknecht2025babel}). The domain is the set of inputs $\mathbf{a}_1,\ldots,\mathbf{a}_m$ on which a measure of $\mathcal{C}$-information can in principle take on different values. For instance, based on the properties in Proposition~\ref{prop:implied}, the domain of redundant information can be reduced to the set of antichains $\mathcal{A}_n$ (because of the symmetry and superset-invariance properties). In fact, the domains of all PID-inducing concepts mentioned above can be reduced to certain sets of antichains (see \cite[Thm. 2]{gutknecht2025babel}). 

Furthermore, every $\mathcal{C}$-information concept comes with a natural order relation on its domain:
\begin{equation}\label{eq:def_order}
(\mathbf{a}_1,\ldots,\mathbf{a}_m) \preceq_\mathcal{C} (\mathbf{b}_1,\ldots,\mathbf{b}_m) \quad \iff \quad \mathcal{C}(\mathbf{a}_1,\ldots,\mathbf{a}_m;f)\Rightarrow \mathcal{C}(\mathbf{b}_1,\ldots,\mathbf{b}_m;f)
\end{equation}
By Definition \ref{def:c_info}, each term $I^\mathcal{C}(\mathbf{a}_1,\ldots,\mathbf{a}_m;T)$ is expressed as a sum over those information atoms $\Pi(f)$ satisfying $\mathcal{C}(\mathbf{a}_1,\ldots,\mathbf{a}_m;f)$. The implication relation in \eqref{eq:def_order} therefore means that all atoms contributing to the term indexed by $\mathbf{a}_1,\ldots,\mathbf{a}_m$ also contribute to the term indexed by $\mathbf{b}_1,\ldots,\mathbf{b}_m$. Intuitively, this induces a nesting structure in which some terms are contained within others, comprising all of their atoms and possibly additional ones. For redundant information this ordering is given by the \textit{redundancy lattice}  (see \cite[Prop. 4]{gutknecht2025babel}):

\begin{definition}[Redundancy Lattice]\label{def:redundancy_lattice}
The redundancy lattice is the partially ordered set $(\mathcal{A}_n,\preceq)$, where
\begin{equation}
\alpha \preceq \beta \quad \iff \quad \forall \mathbf{b} \in \beta \; \exists \mathbf{a} \in \alpha : \mathbf{a} \subseteq \mathbf{b}.
\end{equation}
\end{definition}
The redundancy lattice is shown for $n=3$  in Figure \ref{fig:redundancy}. The interested reader can find the domains and order relations of the other PID-inducing concepts mentioned above in \cite[Thm. 2]{gutknecht2025babel}. These will not be essential for the arguments made in this paper however. A fact we will make use of in the proofs of our main results is that Definition \ref{def:redundancy} of redundant information can be rewritten in terms of the redundancy lattice (see \cite[Prop. 4]{gutknecht2025babel}):
\begin{equation}
I_{\cap}(\alpha;T) = \sum\limits_{\beta \preceq \alpha} \Pi(\beta;T)
\end{equation}

\begin{figure}
    \centering
    \includegraphics[width=1\linewidth]{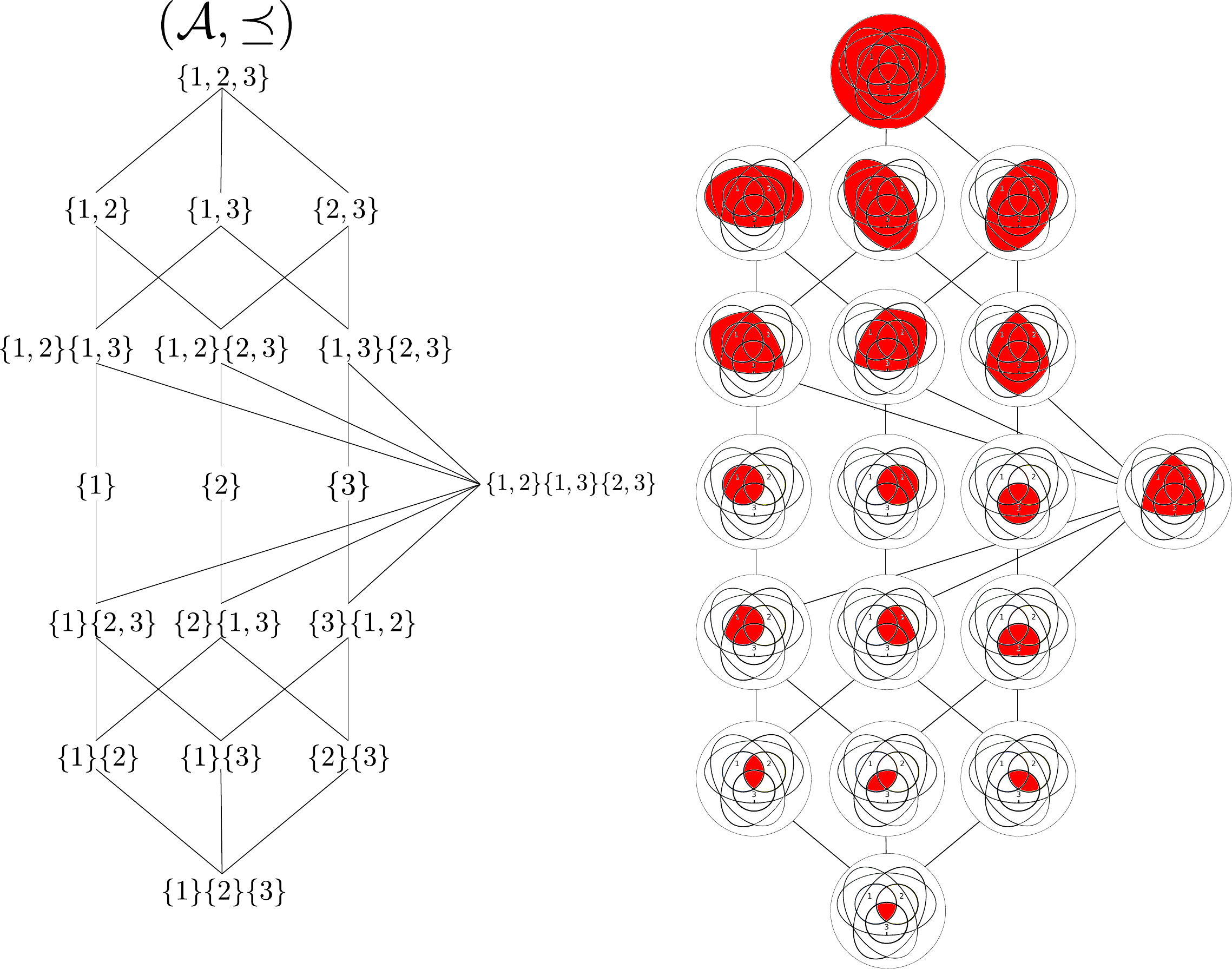}
    \caption{Left: redundancy lattice $(\mathcal{A}_3, \preceq)$. Right: Information diagrams of the corresponding redundancy terms. Clearly visible is the nested structure of redundant information. }
    \label{fig:redundancy}
\end{figure}

This almost completes the theoretical preliminaries required for our results. What remains is a definition of \textit{conditional} $\mathcal{C}$-information. This will be needed in order to formulate chain rules for $\mathcal{C}$-information measures. It is formulated analogous to the notion of conditional information atoms in Definition \ref{def:cond_atoms} above.

\begin{definition}[Conditional $\mathcal{C}$-information]
Let $(\mathbf{S},T,Z)$ be jointly distributed random variables, where $Z$ may be an external 
variable or part of $\mathbf{S}$ or $T$. Let $I^\mathcal{C}$ be a measure of $\mathcal{C}$-information. Then we define

\begin{align}
I^\mathcal{C}(\mathbf{a}_1,\ldots,\mathbf{a}_m;T \mid Z) 
&:= \sum_{z} p(z)\, I^{\mathcal{C}, p(\mathbf{S},T \mid Z=z)}(\mathbf{a}_1,\ldots,\mathbf{a}_m;T).
\end{align}
where the superscript $p(\mathbf{S},T \mid Z=z)$ indicates that we are evaluating the same functional $I^\mathcal{C}$ under the conditional distribution of $(\mathbf{S},T)$ given $Z=z$.
\end{definition}

\section{Candidate Properties}
In the PID framework, as presented in the previous section, one can distinguish between two kinds of properties. On the one hand, there are the minimal conditions that follow directly from the definitions: the consistency equations for the atoms as well as the implied properties of PID-inducing concepts such as redundant information, as described in Proposition \ref{prop:implied}. These are, at least from our perspective, constitutive background assumptions: without them, the atoms and the associated redundancy measures could not have the intended interpretation (see also discussion in Section \ref{sec:discussion} below).

In addition to this definitional core, the literature has proposed a number of further properties that one might wish to demand.  In this section, we present the properties that feature in our inconsistency results, providing a brief motivation for each. A more detailed discussion is provided in Section \ref{sec:discussion}.

\begin{property}[Local Positivity (\textbf{LP})]
All information atoms are non-negative:
\begin{equation}
\Pi(f;T) \geq 0 \quad \text{for all } f \in \mathcal{B}_n.
\end{equation}
\end{property}
Local positivity is a widely desired additional property for PID measures. It was highlighted by Williams and Beer as a central positive feature of their original $I_{min}$-based PID (although not stated as an explicit axiom), and it mirrors the non-negativity of mutual information in classical information theory. The assumption seems natural if information is understood as "uncertainty reduction" or as "dependence strength": in the extreme cases  (no reduction of uncertainty or complete independence) it may be zero, whereas the meaning of negative values would at least appear conceptually problematic.

\begin{property}[Re-encoding Invariance (\textbf{REI})]
Let $(S_1,\ldots,S_n,T)$ be jointly distributed random variables, and let 
\[
S_i' := f_i(S_i), \quad T' := g(T),
\] 
where each $f_i$ and $g$ is an invertible transformation on the supports of the respective variables.  
A PID satisfies re-encoding invariance if the values of the information atoms are unchanged under such transformations. Equivalently, any associated PID-inducing measure $I^\mathcal{C}$ is invariant:
\begin{equation}
I^\mathcal{C}(\mathbf{a}_1,\ldots,\mathbf{a}_m;T) 
= I^\mathcal{C}(\mathbf{a}_1',\ldots,\mathbf{a}_m';T'),
\end{equation}
where $\mathbf{a}_i'$ denotes the collection of re-encoded sources obtained by applying the bijections to all sources with indices in $\mathbf{a}_i$.
\end{property}
Re-encoding invariance requires that a PID’s values remain unchanged under invertible transformations of the sources or the target, ensuring that informational relationships do not depend on arbitrary choices of representation. In the discrete case, such transformations simply correspond to relabeling outcomes (for example, coding values $\{0,1,2,3\}$ as binary strings $\{00,01,10,11\}$). In this setting, \textbf{REI} is guaranteed automatically once information measures are understood as functionals of probability distributions, which is perhaps the reason why it not seen much explicit discussion in the PID literature. Mutual information itself is known to be transformation invariant both in the discrete and in the continuous case \cite{kraskov2004estimating}.

\begin{property}[Target Chain Rule  for Redundant Information (\textbf{TCR})]
Suppose the target splits as $T=(T_1,T_2)$, then
\begin{equation}
I_\cap(\mathbf{S};T) = I_\cap(\mathbf{S};T_1) + I_\cap(\mathbf{S};T_2 \mid T_1).
\end{equation}
\end{property}
The target chain rule captures the idea that redundant information about a composite target can be decomposed into redundant information about its first component and the remaining redundant information about the second component once the first is known. This mirrors the familiar chain rule for mutual information. \textbf{TCR} has been introduced in \cite{Bertschinger2013_shared_info}, and measures satisfying it have been proposed in \cite{finn2018pointwise} and \cite{makkeh2021introducing}. It will be proved below that the target chain rule for redundant information is equivalent to analogous target chain rules for any other PID-inducing concept as well as a target chain rule for the information atoms (see Lemma \ref{lem:equiv_tcr}).

\begin{property}[Lattice Monotonicity of Redundant Information (\textbf{LM})]
Redundant information is monotone along the redundancy lattice:
\begin{equation}
\alpha \preceq \beta \;\;\Rightarrow\;\; 
I_\cap(\alpha;T) \leq I_\cap(\beta;T)
\end{equation}
\end{property}
This property reflects the natural idea that since the redundancy lattice describes the nested structure of redundancy terms (see Figure \ref{fig:redundancy} above), redundancies higher up in the lattice should be at least as large as those further below. If one assumes local positivity this is immediate (see Lemma \ref{lem:lp_lm} below). It is worth noting, that lattice monotonicity is stronger than the following “source monotonicity” property:

\begin{property}[Source Monotonicity of Redundant Information (\textbf{SM})]
Redundant information is monotone under adding a source collection to its argument
    \begin{equation}
I_\cap(\mathbf{a}_1,\ldots,\mathbf{a}_m,\mathbf{a}_{m+1};T) \;\leq\; I_\cap(\mathbf{a}_1,\ldots,\mathbf{a}_m;T),
\end{equation}
\end{property}
This property is part of the monotonicity axiom of Williams and Beer \cite{williams2010nonnegative,williams2011information}. It concerns the effect of adding an source collection the argument: Intuitively, any information shared by collections \( \mathbf{a}_1, \dots, \mathbf{a}_m, \mathbf{a}_{m+1} \) is also shared by collections \( \mathbf{a}_1, \dots, \mathbf{a}_m \), but not necessarily the other way around. Therefore, adding a collections should only increase or maintain the amount of redundant information, never decrease it.  This is in fact a special case of moving up the redundancy lattice since it is always the case that:
\begin{equation}
\{\mathbf{a}_1,\ldots,\mathbf{a}_m,\mathbf{a}_{m+1}\} \preceq \{\mathbf{a}_1,\ldots,\mathbf{a}_m\}
\end{equation}
However, moving up the lattice does not always correspond to simply adding a source collection; for instance, \( \{1\}\{2,3\} \) is below \( \{1,2\} \), even though we are not adding a new source in the latter case.  The justificatory rationale for \textbf{SM} and \textbf{LM} is, however, essentially the same: when one redundancy term is completely contained within another, the former should not be larger than the latter. Thus, if one accepts source monotonicity for this reason, one should equally accept lattice monotonicity.

\begin{property}[Identity Property (\textbf{ID})]\label{property:ID}
For two sources $S_i,S_j$,
\begin{equation}
I_\cap(\{S_i\},\{S_j\};(S_i,S_j)) = I(S_i;S_j).
\end{equation}

\end{property}
If the target is simply the concatenation of two sources, then the information they share about that target should be exactly the information they have about each other. This property was first introduced in \cite{Harder2013}, motivated by a critique of the original $I_{\min}$ measure of Williams and Beer (for more details see Section \ref{sec:discussion} below.). There is also a weaker version of the identity property known as the \textit{independent identity property}  \cite{ince2017measuring} which only demands that if $S_1$ and $S_2$ are independent their redundant information about the target $T=(S_1,S_2)$ should be zero:

\begin{property}[Independent Identity Property (\textbf{IID})]
For two independent sources $S_i,S_j$,
\begin{equation}
I_\cap(\{S_i\},\{S_j\};(S_i,S_j)) = 0.
\end{equation}
\end{property}

\section{Inconsistency Results}
Here we present our main inconsistency results. First, we strengthen and clarify the theorem of Rauh et al. \cite{rauh2014reconsidering} which states that no PID and associated redundancy function can satisfy the Williams and Beer axioms (these are the properties in Proposition \ref{prop:implied} plus \textbf{SM}), \textbf{LP} and \textbf{ID}. We show that \textbf{SM} is in fact not necessary for the statement, and make explicit the assumption of \textbf{REI}, which was already used in the original proof by Rauh et al. but not stated as such.

Second, we establish our main result: the joint assumptions of \textbf{LP}, \textbf{TCR}, and \textbf{REI}—each extending a hallmark property of classical information theory—are likewise inconsistent. This result strengthens a related finding by Finn and Lizier \cite{finn2018pointwise}, who proved an analogous incompatibility but only under the additional assumption of the highly contentious \textbf{ID}. This makes our theorem, to the best of our knowledge, the only inconsistency result involving solely properties that extend standard information-theoretic principles.

The proofs follow a common template. They analyze the XOR-Source-Copy Gate 
\begin{definition}[XOR-Source-Copy Gate]
The XOR-Source-Copy Gate is defined by three binary sources $S_1,S_2 \sim \mathrm{Bern}(1/2)$ and 
\begin{equation}
S_3 = S_1 \oplus S_2,
\end{equation}
where $\oplus$ denotes addition modulo $2$. The target variable is the triple
\begin{equation}
T = (S_1,S_2,S_3).
\end{equation}
\end{definition}
The inconsistency proofs focus on the pairwise redundancies in this context
\begin{equation}
I_\cap(\{1\},\{2\};T), \quad I_\cap(\{1\},\{3\};T), \quad I_\cap(\{2\},\{3\};T).
\end{equation}
The key idea is that under \textbf{LP} at least one of these quantities must be strictly positive, while the combination of either \textbf{ID}+\textbf{REI} or \textbf{TCR}+\textbf{REI}+\textbf{LP} forces all of them to vanish. We proceed by formulating some preparatory lemmas.

\subsection{Lemmas}

We first show the fact involved in both main inconsistency proofs that \textbf{LP} implies non-vanishing pairwise redundancies in the XOR-Source Copy Gate.
\begin{Lemma}\label{lem:LP_pw_red}
\textbf{LP} implies that at least one pairwise redundancy must be strictly positive in the XOR-Source-Copy Gate, i.e.
\begin{equation}
I_\cap(\{i\},\{j\};(S_i,S_j)) > 0 \quad \text{for some } i\neq j.
\end{equation}
\end{Lemma}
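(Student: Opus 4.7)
The plan is to combine the explicit mutual-information values of the XOR-Source-Copy Gate with an inclusion--exclusion identity on the atoms, and to use \textbf{LP} to convert this identity into a strict positive lower bound on the sum of the three pairwise redundancies.

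First I would record the elementary quantities. Since $S_1,S_2\sim\mathrm{Bern}(1/2)$ are independent and $S_3=S_1\oplus S_2$, the three sources are pairwise independent and any two of them determine the third, so $H(T)=2$ bits for $T=(S_1,S_2,S_3)$. Consequently
\begin{equation*}
I(\{i\};T)=1 \text{ for each } i=1,2,3,\qquad I(\{1,2,3\};T)=H(T)=2.
\end{equation*}
I would then apply set-theoretic inclusion--exclusion to sums of atoms $\Pi(f;T)$ indexed by the sets $A_i:=\{f\in\mathcal{B}_n\mid f(\{i\})=1\}$. By the consistency equation \eqref{eq:consistency} together with Definitions \ref{def:redundancy} and \ref{def:union}, the weighted sums over $A_i$, over $A_i\cap A_j$, over $A_1\cap A_2\cap A_3$, and over $A_1\cup A_2\cup A_3$ equal, respectively, $I(\{i\};T)$, $I_\cap(\{i\},\{j\};T)$, $I_\cap(\{1\},\{2\},\{3\};T)$, and $I_\cup(\{1\},\{2\},\{3\};T)$. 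Inclusion--exclusion therefore yields the identity
\begin{equation*}
I_\cup(\{1\},\{2\},\{3\};T)=\sum_{i=1}^{3}I(\{i\};T)-\sum_{i<j}I_\cap(\{i\},\{j\};T)+I_\cap(\{1\},\{2\},\{3\};T).
\end{equation*}

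Under \textbf{LP} every atom is non-negative; since axiom B2 forces $f(\{1,2,3\})=1$ for all $f\in\mathcal{B}_n$, the union sum is a sub-sum of $I(\{1,2,3\};T)=2$, giving $I_\cup(\{1\},\{2\},\{3\};T)\le 2$, while the triple redundancy satisfies $I_\cap(\{1\},\{2\},\{3\};T)\ge 0$. Substituting these bounds and the three singleton mutual informations into the identity yields
\begin{equation*}
\sum_{i<j}I_\cap(\{i\},\{j\};T)\ge 3+0-2=1,
\end{equation*}
so at least one pairwise redundancy must be at least $1/3$, and in particular strictly positive. The only delicate step is the careful translation of inclusion--exclusion from sets of parthood distributions to sums of atom values; I anticipate no genuine obstacle beyond making this bookkeeping explicit and invoking \textbf{LP} at exactly the two points where the identity relaxes to an inequality.
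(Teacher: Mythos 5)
Your proof is correct, and it rests on the same underlying mechanism as the paper's: the excess of $\sum_i I(S_i;T)=3$ over $I(S_1,S_2,S_3;T)=2$ must, under \textbf{LP}, be absorbed by atoms of redundancy degree at least two, which forces a pairwise redundancy to be positive. The difference is one of packaging. The paper expands the redundancy--synergy index directly at the atom level, introducing the degree of redundancy $r(\alpha)$ and reading off the exact identity $\Pi(\{1\}\{2\})+\Pi(\{1\}\{3\})+\Pi(\{2\}\{3\})+2\Pi(\{1\}\{2\}\{3\})-\sum_{r(\alpha)=0}\Pi(\alpha)=1$; this pinpoints exactly which atoms must carry the positivity. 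You instead run inclusion--exclusion over the sets $A_i=\{f: f(\{i\})=1\}$ and work at the level of the coarse-grained measures $I_\cap$ and $I_\cup$; your step ``$I_\cup(\{1\},\{2\},\{3\};T)\le 2$'' is precisely equivalent to the paper's ``$-\sum_{r(\alpha)=0}\Pi(\alpha)\le 0$''. Your version buys a slightly more quantitative conclusion (the three pairwise redundancies sum to at least $1$, so one is at least $1/3$) and avoids introducing $r(\alpha)$, at the cost of having to justify that the union-sum of atoms is bounded by $I(\{1,2,3\};T)$ via axiom B2, which you do. Two cosmetic remarks: the sums over the $A_i$ and their intersections are plain (unweighted) sums of atoms, so ``weighted'' is misleading; and, like the paper's own proof, you establish positivity of $I_\cap(\{i\},\{j\};T)$ with target $T$ rather than the target $(S_i,S_j)$ appearing in the lemma statement --- this is the form actually used in Theorems \ref{thm:id_lp_rei_incomp} and \ref{thm:lp_tcr_incompatible}, so no harm done.
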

\begin{proof}
Consider the quantity
\begin{equation}
\mathrm{RSI} := \sum_{i=1}^3 I(S_i;T)- I(S_1,S_2,S_3;T) = 3 - 2 = 1,
\end{equation}
known from the literature as the \emph{redundancy–synergy index}. Expanding each mutual information term via the consistency equations \eqref{eq:consistency} gives (written in terms of antichain notation explained in Section \ref{sec:background:pid})
\begin{align} \label{eq:rsi_degr_red}
\mathrm{RSI} = \sum_{i=1}^3 \ \sum_{\{i\}\in \alpha} \Pi(\alpha) - \sum_{\alpha \in \mathcal{A}_3} \Pi(\alpha).
\end{align}
The multiplicity with which an atom $\Pi(\alpha)$ appears in the first sum is determined by its \emph{degree of redundancy} \cite{gutknecht2025shannon},
\begin{equation}
r(\alpha) := \bigl|\{ i \in [n] \mid \{i\} \in \alpha \}\bigr|.
\end{equation}
That is, $r(\alpha)$ counts how many singletons are contained in $\alpha$ and hence how many mutual information terms $I(S_i;T)$ the atom $\Pi(\alpha)$ contributes to. For illustration see Figure \ref{fig:degr_red}.  With this notation the expression simplifies to
\begin{equation}\label{eq:rsi_degr_red_2}
\mathrm{RSI} = \sum_{r(\alpha)\geq 2} (r(\alpha)-1)\,\Pi(\alpha) - \sum_{r(\alpha)=0} \Pi(\alpha).
\end{equation}
since all atoms with $r(\alpha)=1$ cancel. For the XOR-source-copy-gate this yields
\begin{equation}
RSI = \Pi(\{1\}\{2\}) + \Pi(\{1\}\{3\}) + \Pi(\{2\}\{3\}) + 2\Pi(\{1\}\{2\}\{3\}) - \sum_{r(\alpha)=0} \Pi(\alpha) = 1.
\end{equation}
For this to be true under \textbf{LP}, at least one of the atoms
\[
\Pi(\{1\}\{2\}), \quad \Pi(\{1\}\{3\}), \quad \Pi(\{2\}\{3\}), \quad \Pi(\{1\}\{2\}\{3\})
\]
must be positive. Each of these atoms contributes to at least one pairwise redundancy term $I_\cap(\{i\},\{j\};T)$, so at least one such redundancy must be positive as well under \textbf{LP}.
\end{proof}

\begin{figure}
    \centering
    \includegraphics[width=0.5\linewidth]{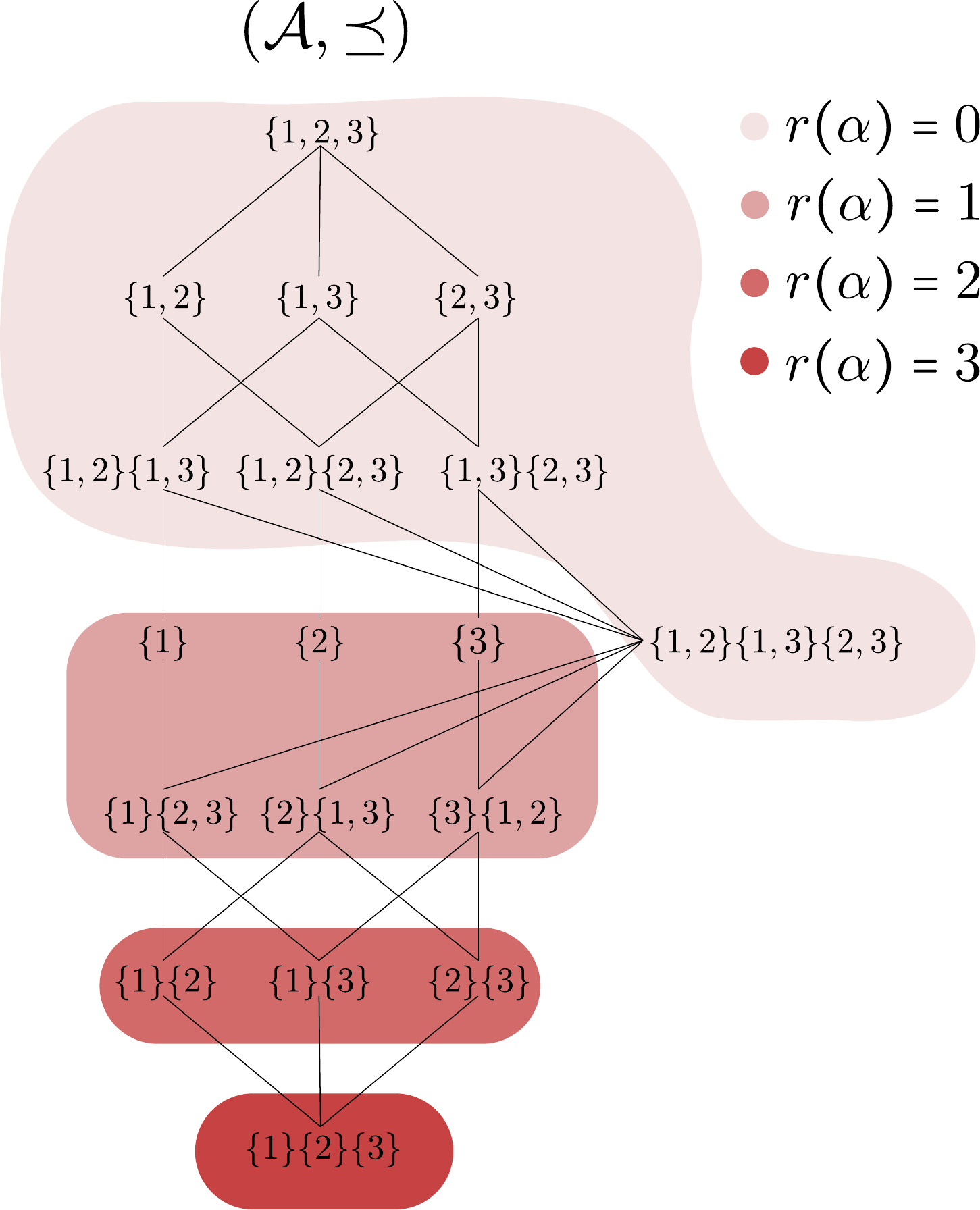}
    \caption{Illustration of the degree of redundancy $r(\alpha)$ of information atoms for $n=3$. The degree of redundancy measures how many individual mutual information terms $I(S_i;T)$ an atom contributes to, and hence, how often it appears in the first term in Equation \eqref{eq:rsi_degr_red}. The atoms with $r(\alpha)=0$ are synergistic in the sense that they cannot be obtained from any individual information source. These only appear in the second term of Equation \eqref{eq:rsi_degr_red} and hence end up with a minus sign in Equation \eqref{eq:rsi_degr_red_2}. The atoms with $r(\alpha)=1$ appear exactly once in both terms and hence cancel. The atoms with $r(\alpha)\geq 2$, one might say the genuinely redundant ones, end up with a plus sign and multiplicity $r(\alpha)-1$ in Equation \eqref{eq:rsi_degr_red_2}.}
    \label{fig:degr_red}
\end{figure}

The following lemma and corollary establishes the relation between local positivity and lattice monotonicity and derives bounds on pairwise redundancies that will be used in the proofs of our main theorems.

\begin{Lemma}[\textbf{LP} implies \textbf{LM}]\label{lem:lp_lm}
Assume \textbf{LP}. Then for any antichains $\alpha \preceq \beta$, and conditioning variable $Z$
\begin{align}
I_\cap(\alpha;T)\le I_\cap(\beta;T)
\quad\text{and}\quad
I_\cap(\alpha;T\mid Z)\le I_\cap(\beta;T\mid Z).
\end{align}
\end{Lemma}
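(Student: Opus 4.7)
The plan is to exploit the lattice representation of redundancy noted at the end of Section~2.2, namely
\[
I_\cap(\alpha;T) \;=\; \sum_{\gamma\preceq\alpha}\Pi(\gamma;T),
\]
and then combine it with the transitivity of $\preceq$ and the non-negativity of every atom under \textbf{LP}. The unconditional claim should then reduce to a one-line set-inclusion argument, and the conditional claim should follow by integrating the unconditional inequality against $p(z)$.

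More concretely, I would first assume $\alpha\preceq\beta$ and observe that by transitivity of the redundancy-lattice order, $\gamma\preceq\alpha$ implies $\gamma\preceq\beta$. Hence
\[
\{\gamma\in\mathcal{A}_n:\gamma\preceq\alpha\}\;\subseteq\;\{\gamma\in\mathcal{A}_n:\gamma\preceq\beta\},
\]
and so
\[
I_\cap(\beta;T) - I_\cap(\alpha;T) \;=\; \sum_{\substack{\gamma\preceq\beta\\ \gamma\not\preceq\alpha}}\Pi(\gamma;T).
\]
Under \textbf{LP} every summand on the right-hand side is non-negative, yielding $I_\cap(\alpha;T)\le I_\cap(\beta;T)$.

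For the conditional version I would invoke the definition of conditional $\mathcal{C}$-information, which instructs us to evaluate the same functional $I_\cap$ under the conditional distribution $p(\mathbf{S},T\mid Z=z)$ and average with weights $p(z)$. Since \textbf{LP} is a property of the PID functionals that is assumed to hold for arbitrary joint distributions of $(\mathbf{S},T)$, it holds in particular for each conditional distribution $p(\mathbf{S},T\mid Z=z)$. Applying the unconditional inequality pointwise in $z$ gives
\[
I_\cap^{p(\mathbf{S},T\mid Z=z)}(\alpha;T) \;\le\; I_\cap^{p(\mathbf{S},T\mid Z=z)}(\beta;T)
\quad\text{for each } z,
\]
and multiplying by the non-negative weights $p(z)$ and summing preserves the inequality, yielding $I_\cap(\alpha;T\mid Z)\le I_\cap(\beta;T\mid Z)$.

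The argument is essentially mechanical once the lattice-sum representation is in hand, so I do not expect a serious obstacle. The only conceptual subtlety worth flagging is the implicit but standard reading of \textbf{LP} as a property of the functional holding under any joint distribution (and thus under every conditional slice); if one wanted to be strict, this could be emphasized in a short remark before invoking the conditional step.
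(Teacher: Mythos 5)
Your proof is correct and follows essentially the same route as the paper's: both use the downward-sum representation $I_\cap(\alpha;T)=\sum_{\gamma\preceq\alpha}\Pi(\gamma;T)$, deduce the inequality from the inclusion of down-sets together with non-negativity of atoms under \textbf{LP}, and obtain the conditional version by applying the unconditional inequality under each conditional distribution $p(\mathbf{S},T\mid Z=z)$ and averaging. Your explicit remark that \textbf{LP} must be read as holding for arbitrary joint distributions is exactly the point the paper also makes in its final sentence.
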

\begin{proof}
See Appendix \ref{app:lp_lm}
\end{proof}

\begin{cor}[Pairwise redundancy upper bounds] \label{cor:mi_bounds}
Assume \textbf{LP}. For distinct $i\neq j$,
\begin{align}
I_\cap(\{i\}\{j\};T) &\le \min\{I(S_i;T),\,I(S_j;T)\} \\
I_\cap(\{i\}\{j\};T\mid Z) &\le \min\{I(S_i;T\mid Z),\,I(S_j;T\mid Z)\}.
\end{align}
\end{cor}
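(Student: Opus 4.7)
The plan is to derive the corollary as a direct consequence of the preceding lemma (\textbf{LP} $\Rightarrow$ \textbf{LM}) combined with the self-redundancy property from Proposition~\ref{prop:implied}. The core observation is that in the redundancy lattice, the two-element antichain $\{\{i\},\{j\}\}$ lies below both singleton antichains $\{\{i\}\}$ and $\{\{j\}\}$.

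First I would verify this lattice relation directly from Definition~\ref{def:redundancy_lattice}: to check $\{\{i\},\{j\}\} \preceq \{\{i\}\}$, we must show that for every $\mathbf{b} \in \{\{i\}\}$ there exists $\mathbf{a} \in \{\{i\},\{j\}\}$ with $\mathbf{a} \subseteq \mathbf{b}$. Taking the only element $\mathbf{b} = \{i\}$ and the element $\mathbf{a} = \{i\}$ from the antichain on the left gives $\mathbf{a} \subseteq \mathbf{b}$, so the relation holds. The analogous check yields $\{\{i\},\{j\}\} \preceq \{\{j\}\}$.

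Next, I would apply Lemma~\ref{lem:lp_lm}: under \textbf{LP}, lattice monotonicity gives
\begin{equation}
I_\cap(\{i\}\{j\};T) \le I_\cap(\{i\};T)
\quad\text{and}\quad
I_\cap(\{i\}\{j\};T) \le I_\cap(\{j\};T).
\end{equation}
Invoking the self-redundancy property from Proposition~\ref{prop:implied}, the right-hand sides equal $I(S_i;T)$ and $I(S_j;T)$ respectively, so taking the minimum establishes the unconditional bound.

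For the conditional statement, I would repeat the same argument using the second inequality of Lemma~\ref{lem:lp_lm}, which already provides the conditional version of lattice monotonicity. The self-redundancy property applied to the conditional distribution $p(\mathbf{S},T \mid Z=z)$ for each $z$ and then averaged with weights $p(z)$ yields $I_\cap(\{i\};T \mid Z) = I(S_i;T \mid Z)$, completing the bound. Since both steps are immediate applications of already-established results, I do not anticipate any real obstacle; the corollary is essentially a convenient specialization of the lemma to pairwise redundancies.
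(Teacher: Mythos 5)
Your proof is correct and follows exactly the same route as the paper's own argument: the lattice relations $\{i\}\{j\}\preceq\{i\}$ and $\{i\}\{j\}\preceq\{j\}$, Lemma~\ref{lem:lp_lm}, self-redundancy, and the averaging definition of conditional redundancy for the conditional case. The only difference is that you spell out the verification of the lattice relation from Definition~\ref{def:redundancy_lattice}, which the paper leaves implicit.
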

\begin{proof}
In the lattice, $\{i\}\{j\}\preceq \{i\}$ and $\{i\}\{j\}\preceq \{j\}$; apply Lemma \ref{lem:lp_lm} and use the self-redundancy property $I_\cap(\{i\};T)=I(S_i;T)$. By definition of conditional redundancy as being the same functional evaluated under the conditional distribution given $Z=z$ and averaged over $p(z)$, the conditional self-redundancies become conditional mutual information so that $I_\cap(\{i\};T|Z)=I(S_i;T|Z)$. 
\end{proof}

Finally, we provide equivalent formulations of the properties \textbf{ID} and \textbf{TCR} that make clear that the inconsistencies identified in this work are not specific to redundant information. As described in Section \ref{sec:background:pid_inducing} redundant information is just one possible PID-inducing concept on a par with a whole variety of alternatives such as \textit{union information}, \textit{weak synergy}, and \textit{vulnerable information}. We show that \textbf{ID} and \textbf{TCR} admit equivalent formulations for any PID-inducing measure, and that the inconsistencies therefore persist under these reformulations.

\begin{Lemma}[Equivalent versions of \textbf{TCR}]\label{lem:equiv_tcr}
\label{lem:TCR-R-C-A}
Let $T=(T_1,T_2)$ and let $I_\cap$ be a measure of redundant information. The following are equivalent:

\begin{enumerate}
\item[(i)] $I_\cap$ satisfies \textbf{TCR}.
\item[(ii)] The information atoms induced by $I_\cap$ satisfy a target chain rule (\textbf{A-TCR}):
\begin{equation}
\Pi(f;T) \;=\; \Pi(f;T_1) \;+\; \Pi(f;T_2 \mid T_1).
\end{equation}

\item[(iii)] Any measure of $\mathcal{C}$-information associated with those information atoms satisfies a target chain rule (\textbf{C-TCR}):
\begin{equation}
I^{\mathcal{C}}(\alpha;T) \;=\; I^{\mathcal{C}}(\alpha;T_1) \;+\; I^{\mathcal{C}}(\alpha;T_2 \mid T_1).
\end{equation}
In particular, \textbf{C-TCR} subsumes the special cases \textbf{U-TCR} (union information), \textbf{S-TCR} (weak synergy), and \textbf{V-TCR} (vulnerable information).

\end{enumerate}

\end{Lemma}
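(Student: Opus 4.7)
The plan is to prove the cycle $(ii) \Rightarrow (iii) \Rightarrow (i) \Rightarrow (ii)$. Of these, $(iii) \Rightarrow (i)$ is immediate: by Definition \ref{def:redundancy}, redundant information is itself a measure of $\mathcal{C}$-information for the specific condition $\mathcal{C}_\mathrm{red}$, so a chain rule holding for every $\mathcal{C}$-information specializes to \textbf{TCR}. The remaining two directions rely only on the linear expansion of $\mathcal{C}$-information into atoms together with a uniqueness argument on the redundancy lattice.

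For $(ii) \Rightarrow (iii)$, I would start from the defining sum $I^{\mathcal{C}}(\alpha;T) = \sum_{f :\, \mathcal{C}(\alpha;f)} \Pi(f;T)$ of Definition \ref{def:c_info}, substitute the atomic chain rule $\Pi(f;T) = \Pi(f;T_1) + \Pi(f;T_2 \mid T_1)$, and split the sum into two pieces. The first piece is $I^{\mathcal{C}}(\alpha;T_1)$ by definition. To identify the second piece with $I^{\mathcal{C}}(\alpha;T_2 \mid T_1)$, I would expand the conditional atom as $\Pi(f;T_2 \mid T_1) = \sum_{t_1} p(t_1)\, \Pi^{p(\mathbf{S},T_2 \mid T_1=t_1)}(f;T_2)$, swap the finite sums over $f$ and $t_1$, and recognize the inner sum over $f$ as the $\mathcal{C}$-information evaluated under the conditional distribution, whose expectation over $t_1$ is by definition $I^{\mathcal{C}}(\alpha;T_2 \mid T_1)$. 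This is a purely formal rewriting and poses no obstacle.

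The nontrivial direction is $(i) \Rightarrow (ii)$. Using the lattice identity $I_\cap(\alpha;T) = \sum_{\beta \preceq \alpha} \Pi(\beta;T)$ recalled at the end of Section \ref{sec:background:pid_inducing}, together with its conditional analogue obtained by applying the unconditional identity pointwise under each $p(\mathbf{S},T_2 \mid T_1 = t_1)$ and averaging over $t_1$, hypothesis $(i)$ rewrites as
\begin{equation*}
\sum_{\beta \preceq \alpha} D(\beta) \;=\; 0 \quad \text{for every antichain } \alpha, \qquad D(\beta) \;:=\; \Pi(\beta;T) - \Pi(\beta;T_1) - \Pi(\beta;T_2 \mid T_1).
\end{equation*}
Induction on the height of $\alpha$ in the redundancy lattice, or equivalently Möbius inversion on the finite poset $(\mathcal{A}_n, \preceq)$, then forces $D(\beta) = 0$ for every $\beta$, which is precisely $(ii)$ after translating antichains back to parthood distributions via the bijection $\beta \leftrightarrow f_\beta$ from Section \ref{sec:background:pid}.

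The step I expect to require the most care is verifying that the lattice expansion really carries over verbatim to the conditional setting, so that a single finite poset supports the Möbius step for all three terms $\Pi(\cdot;T)$, $\Pi(\cdot;T_1)$, $\Pi(\cdot;T_2 \mid T_1)$. This reduces to interchanging the expectation $\sum_{t_1} p(t_1)(\cdot)$ with the lattice sum $\sum_{\beta \preceq \alpha}$, which is legitimate because the downset is finite and independent of the conditional distribution. Once this bookkeeping is in place, the Möbius argument is standard and the three chain rules collapse to one.
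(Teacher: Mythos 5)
Your proof is correct and follows essentially the same route as the paper's: the same three implications (i)$\Rightarrow$(ii), (ii)$\Rightarrow$(iii), (iii)$\Rightarrow$(i) are established, with (iii)$\Rightarrow$(i) immediate from $I_\cap$ being $\mathcal{C}_{\mathrm{red}}$-information and (ii)$\Rightarrow$(iii) a regrouping of the atomic expansion. The only cosmetic difference is in (i)$\Rightarrow$(ii): the paper abstractly invokes the existence of inverse coefficients $K(f;\beta)$ for the linear system relating $I_\cap$ to the atoms, whereas you carry out that inversion explicitly via M\"obius inversion (equivalently, induction from the bottom of the redundancy lattice) applied to the difference function $D(\beta)$ — the same underlying fact, made concrete.
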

\begin{proof}
See Appendix \ref{app:equiv_tcr}
\end{proof}

\begin{Lemma}[Equivalent versions of \textbf{ID}]\label{lem:equiv_id}
Let $T= (S_1,S_2)$. Then the following are equivalent:
\begin{enumerate}
    \item[(i)] \textbf{ID} as defined in Property \ref{property:ID} above.
    \item[(ii)] $I_\cup(S_1,S_2;T) = H(S_1,S_2)$
    \item[(iii)] $I_{ws}(S_1,S_2;T) = 0$
    \item[(iv)] $I_{vul}(S_1,S_2;T) = H(S_1|S_2) + H(S_2|S_1)$
\end{enumerate}

\end{Lemma}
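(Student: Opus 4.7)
The plan is to reduce everything to $n=2$ and translate each statement into a linear constraint on the four atoms, all of which turn out to be equivalent to the synergy atom vanishing. Let me write $r := \Pi(\{\{1\},\{2\}\};T)$, $u_1 := \Pi(\{\{1\}\};T)$, $u_2 := \Pi(\{\{2\}\};T)$, $s := \Pi(\{\{1,2\}\};T)$ for the redundancy, two unique, and synergy atoms. Since $T=(S_1,S_2)$, the consistency equations \eqref{eq:consistency} specialize to $r+u_1 = H(S_1)$, $r+u_2 = H(S_2)$, and $r+u_1+u_2+s = H(S_1,S_2)$, and combining these with $H(S_1)+H(S_2) = H(S_1,S_2)+I(S_1;S_2)$ yields the key identity $s = r - I(S_1;S_2)$.

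Next I would identify, for each of the four PID-inducing concepts, which atoms appear in the expression on the left-hand side of (i)--(iv). For the argument $(\{1\},\{2\})$, a direct enumeration over the four parthood distributions shows that $\mathcal{C}_{\mathrm{red}}$ selects only the redundancy atom, $\mathcal{C}_{union}$ selects everything except synergy, $\mathcal{C}_{ws}$ selects only the synergy atom, and $\mathcal{C}_{vul}$ selects everything except redundancy. Together with the consistency equations this gives
\begin{align*}
I_\cap(\{S_1\},\{S_2\};T) &= r, & I_\cup(S_1,S_2;T) &= H(S_1,S_2)-s, \\
I_{ws}(S_1,S_2;T) &= s, & I_{vul}(S_1,S_2;T) &= H(S_1,S_2)-r.
\end{align*}

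Finally I would translate each statement. Both (ii) and (iii) read $s=0$. Using $H(S_1\mid S_2)+H(S_2\mid S_1) = H(S_1,S_2)-I(S_1;S_2)$, statement (iv) becomes $r = I(S_1;S_2)$, and statement (i) reads $r = I(S_1;S_2)$ outright; by the key identity $s = r - I(S_1;S_2)$, both are equivalent to $s=0$. This closes the chain (i) $\Leftrightarrow$ (ii) $\Leftrightarrow$ (iii) $\Leftrightarrow$ (iv). The only step requiring any care is the enumeration of which of the four parthood distributions satisfy each $\mathcal{C}$-condition on the argument $(\{1\},\{2\})$; beyond that bookkeeping, the lemma reduces to a short linear computation driven entirely by the consistency equations and the fact that $T=(S_1,S_2)$ makes the joint mutual informations collapse to entropies.
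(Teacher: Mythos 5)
Your proof is correct: the atom selections for the four conditions on the argument $(\{1\},\{2\})$ are right, the consistency equations for $T=(S_1,S_2)$ do give $r+u_1=H(S_1)$, $r+u_2=H(S_2)$, $r+u_1+u_2+s=H(S_1,S_2)$, and hence $s=r-I(S_1;S_2)$, and all four statements indeed collapse to $s=0$. However, your route is genuinely different from the paper's. The paper never enumerates atoms; instead it chains (i) $\Rightarrow$ (ii) $\Rightarrow$ (iii) $\Rightarrow$ (iv) $\Rightarrow$ (i), each step invoking a general identity imported from the companion paper \cite{gutknecht2025babel}: the inclusion--exclusion relations $I_\cap(\{1\},\{2\};T)+I_\cup(\{1\},\{2\};T)=I(S_1;T)+I(S_2;T)$ and $I_{ws}(\{1\},\{2\};T)+I_{vul}(\{1\},\{2\};T)=I_{ws}(\{1\};T)+I_{ws}(\{2\};T)$, and the complementation relations $I_\cup+I_{ws}=I(S_1,S_2;T)$ and $I_{vul}+I_\cap=I(S_1,S_2;T)$, specialized to $T=(S_1,S_2)$. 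Your version is more self-contained and makes the common cause of all four equivalences visible (everything is the vanishing of the synergy atom), at the cost of being written for $n=2$ only; the paper's identities hold for arbitrary $n$, which matters because \textbf{ID} is invoked in Theorem \ref{thm:id_lp_rei_incomp} for pairs $S_i,S_j$ inside a three-source system. This is a cosmetic rather than substantive limitation of your argument: replacing the four atoms by the four block sums $\sum_{f(\{1\})=a,\,f(\{2\})=b}\Pi(f)$ for $a,b\in\{0,1\}$ reproduces exactly the same linear relations for any $n$, so your computation generalizes verbatim. It would be worth stating that one-line generalization explicitly so the corollary to Theorem \ref{thm:id_lp_rei_incomp} is covered.
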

\begin{proof}
See Appendix \ref{app:equiv_id}.
\end{proof}

\subsection{Incompatibility of Identity Property, Local Positivity, and Re-encoding Invariance}
We here strengthen the incompatibility result of Rauh et al. \cite{rauh2014reconsidering} and present it in our terminology and notation.

\begin{thm}[Incompatibility of \textbf{LP}, \textbf{ID}, and \textbf{REI}]\label{thm:id_lp_rei_incomp}
There exists no PID $\{\Pi(f;T)\}_{f \in \mathcal{B}_n}$ and associated measure of redundant information $\{I_\cap(\alpha;T)\}_{\alpha \in \mathcal{A}_n}$ that simultaneously satisfy
local positivity, the identity property, and re-encoding Invariance for all $n\in \mathbb{N}$.
\end{thm}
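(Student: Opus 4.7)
My plan is to apply the XOR-Source-Copy Gate template that the paper already outlined: show that \textbf{LP} forces at least one pairwise redundancy to be strictly positive (already done in Lemma~\ref{lem:LP_pw_red}), and then show that \textbf{ID}+\textbf{REI} force every pairwise redundancy $I_\cap(\{i\},\{j\};T)$ to vanish. The contradiction is then immediate.

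The key step is the \textbf{ID}+\textbf{REI} argument. Fix any pair $i \neq j$ in $\{1,2,3\}$. The crucial observation is that in the XOR-Source-Copy Gate, any two of $S_1,S_2,S_3$ determine the third via XOR, so on the joint support of $T=(S_1,S_2,S_3)$ the projection
\begin{equation}
g : (s_1,s_2,s_3) \mapsto (s_i,s_j)
\end{equation}
is a bijection onto the support of $(S_i,S_j)$, with inverse supplied by the XOR relation (e.g.\ $(s_1,s_3)\mapsto(s_1,s_1\oplus s_3,s_3)$). Thus $g(T)$ and $(S_i,S_j)$ have identical joint distributions with the sources. Applying \textbf{REI} with the identity transformations on $S_i,S_j$ and the bijection $g$ on the target yields
\begin{equation}
I_\cap(\{i\},\{j\};T) \;=\; I_\cap(\{i\},\{j\};(S_i,S_j)).
\end{equation}
By \textbf{ID} the right-hand side equals $I(S_i;S_j)$. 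Since $S_1,S_2$ are independent Bernoulli$(1/2)$ variables and $S_3=S_1\oplus S_2$, the three sources are pairwise independent, so $I(S_i;S_j)=0$ for every $i\neq j$. Hence all three pairwise redundancies vanish.

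Combining this with Lemma~\ref{lem:LP_pw_red}, which states that \textbf{LP} implies at least one of $I_\cap(\{i\},\{j\};T)$ is strictly positive in this example, gives the desired contradiction. The argument works already for $n=3$, which is sufficient since the theorem only requires inconsistency for some $n$.

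The main conceptual subtlety—and where my proof differs in emphasis from the original Rauh et al.\ argument—is making the use of \textbf{REI} explicit: the identity property as stated in Property~\ref{property:ID} only constrains the redundancy when the target is literally the pair $(S_i,S_j)$, whereas here the target is the triple $T=(S_1,S_2,S_3)$. The only bridge from one to the other is the invertibility of the projection $g$ on the support, which is exactly what \textbf{REI} licenses. This is what allows us to drop the source monotonicity assumption \textbf{SM} used by Rauh et al.: the original proof invoked \textbf{SM} to compare the redundancy on the augmented target with the redundancy on $(S_i,S_j)$, but \textbf{REI} accomplishes this step more directly and without the monotonicity requirement. No further lemmas beyond Lemma~\ref{lem:LP_pw_red} are needed.
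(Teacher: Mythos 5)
Your proposal is correct and follows essentially the same route as the paper's own proof: Lemma~\ref{lem:LP_pw_red} for the strictly positive pairwise redundancy under \textbf{LP}, pairwise independence plus \textbf{ID} to make $I_\cap(\{i\},\{j\};(S_i,S_j))$ vanish, and \textbf{REI} applied to the bijection between $T$ and $(S_i,S_j)$ on the support to transfer that conclusion to $I_\cap(\{i\},\{j\};T)$ (the paper writes the bijection as $F_{ij}:(s_i,s_j)\mapsto(s_1,s_2,s_3)$ rather than your projection $g$, but these are inverses of one another and the argument is identical). Your closing remark about \textbf{REI} replacing the role of \textbf{SM} in the original Rauh et al.\ argument matches exactly the point the paper makes when introducing the theorem.
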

\begin{proof} Assume for contradiction that such a PID exists and let $S_1,S_2,S_3, T$ satisfy the XOR-Source-Copy Gate. By Lemma \ref{lem:LP_pw_red}, at least one pairwise redundancy $I_\cap(\{i\},\{j\};T)$ must be strictly positive:
\begin{equation}\label{eq:LP_pwr}
I_\cap(\{i\},\{j\};T) > 0    \quad \text{for some } i\neq j.
\end{equation}
Now, we show that under \textbf{ID} and \textbf{REI} they must all vanish.

The sources $S_1,S_2,S_3$ are pairwise independent, hence by \textbf{ID} we obtain
\begin{equation} \label{eq:pwr_zero}
I_\cap(\{i\},\{j\};(S_i,S_j)) = I(S_i;S_j) = 0 \quad \text{for all } i\neq j.
\end{equation}
Moreover, the target $T$ is an invertible function of the sources
\begin{equation}
T = (S_1,S_2,S_3) = F_{ij}(S_i,S_j),
\end{equation}
for $i\neq j$ and where $F_{ij}$ is a bijection from $\{0,1\}^2$ onto $\operatorname{supp}(T) = \{(0,0,0),(0,1,1),(1,0,1),(1,1,0)\}$. 
For example, with $(i,j)=(1,2)$,
\begin{align}
(0,0) &\mapsto (0,0,0), \\
(0,1) &\mapsto (0,1,1), \\
(1,0) &\mapsto (1,0,1), \\
(1,1) &\mapsto (1,1,0).
\end{align}
By \textbf{REI}, redundant information is invariant under such invertible re-encodings of the target. Hence
\begin{equation}
I_\cap(\{i\},\{j\};T) = I_\cap(\{i\},\{j\};(S_i,S_j)) \overset{\eqref{eq:pwr_zero}}{=} 0 \quad \text{for all } i\neq j,
\end{equation}
which establishes the claim and yields a contradiction with \eqref{eq:LP_pwr}, thus establishing the inconsistency.
\end{proof}
\begin{cor}
The inconsistency still holds when \textbf{ID} is replaced by any of its equivalent formulations in Lemma \ref{lem:equiv_id}.
\end{cor}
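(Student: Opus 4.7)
The plan is to invoke Lemma \ref{lem:equiv_id} as a black box. An inspection of the proof of Theorem \ref{thm:id_lp_rei_incomp} reveals that \textbf{ID} is used at exactly one point: for each pair $(i,j)$ of distinct indices in the XOR-Source-Copy Gate, it is applied to $S_i, S_j$ with the subtarget $(S_i, S_j)$ to conclude $I_\cap(\{i\},\{j\};(S_i,S_j)) = I(S_i;S_j) = 0$, using pairwise independence of the sources. The subsequent steps---transporting this vanishing across the bijection $F_{ij}$ via \textbf{REI} to obtain $I_\cap(\{i\},\{j\};T) = 0$, and then contradicting Lemma \ref{lem:LP_pw_red}---make no further reference to \textbf{ID}.

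By Lemma \ref{lem:equiv_id}, in the setting where the target is the concatenation of the two sources under consideration, the four formulations (i)--(iv) are logically equivalent. Hence assuming any of (ii), (iii), or (iv) in place of \textbf{ID} suffices to recover the identity property on each pair $(S_i, S_j)$, from which the same pairwise-redundancy vanishing follows. The remainder of the proof of Theorem \ref{thm:id_lp_rei_incomp} then goes through unchanged and yields the same contradiction with \textbf{LP} via Lemma \ref{lem:LP_pw_red}.

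I expect no genuine obstacle here, since the equivalence lemma has already done all the conceptual work. The only minor verification is that Lemma \ref{lem:equiv_id} is applicable in each of the three instances $(i,j) \in \{(1,2),(1,3),(2,3)\}$, but this is immediate: each pair of sources in the XOR-Source-Copy Gate has the required structure, namely the subtarget is the concatenation of the pair, and the pair is independent (so the right-hand sides of the alternative formulations evaluate to concrete entropy expressions that, by the equivalence, translate back to $I_\cap(\{i\},\{j\};(S_i,S_j))=0$).
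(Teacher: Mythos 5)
Your argument is correct and is precisely the reasoning the paper intends: the corollary is stated without proof because, as you observe, \textbf{ID} enters the proof of Theorem \ref{thm:id_lp_rei_incomp} at exactly one step, and Lemma \ref{lem:equiv_id} lets any of the formulations (ii)--(iv) stand in for it there. Your additional check that the lemma applies to each pair $(S_i,S_j)$ with subtarget $(S_i,S_j)$ is a sensible bit of diligence, but nothing beyond the paper's implicit argument is needed.
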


\subsection{Incompatibility of Local Positivity, Target Chain Rule, and Re-encoding Invariance}

We now present our main result demonstrating the fundamental incompatibility between local positivity, the target chain rule, and re-encoding invariance for information decomposition.

\begin{thm}[Inconsistency of \textbf{LP}, \textbf{TCR}, and \textbf{REI}]\label{thm:lp_tcr_incompatible}
There exists no PID $\{\Pi(f;T)\}_{f \in \mathcal{B}_n}$ and associated measure of redundant information $\{I_\cap(\alpha;T)\}_{\alpha \in \mathcal{A}_n}$ that simultaneously satisfy local positivity, the target chain rule and re-encoding invariance for all $n\in \mathbb{N}$.
\end{thm}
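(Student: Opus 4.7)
The plan is to argue by contradiction within the XOR-Source-Copy Gate, following the template sketched in the paragraph preceding Lemma~\ref{lem:LP_pw_red}. That lemma already forces at least one pairwise redundancy $I_\cap(\{i\},\{j\};T)$ in this gate to be strictly positive under \textbf{LP}, so it suffices to show that the joint assumption of \textbf{LP}, \textbf{TCR}, and \textbf{REI} drives every pairwise redundancy $I_\cap(\{i\},\{j\};T)$ to zero.

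The central observation is that in the XOR gate any two sources determine the third via the XOR relation, so for each pair $\{i,j\}$ the target $T=(S_1,S_2,S_3)$ (supported on the four-element set $\{(0,0,0),(0,1,1),(1,0,1),(1,1,0)\}$) is in explicit bijective correspondence with the pair $(S_i,S_j)\in\{0,1\}^2$; the inverse map simply reconstructs the missing source via XOR. Applying \textbf{REI} to the target (keeping all sources fixed) gives
\[
I_\cap(\{i\},\{j\};T) \;=\; I_\cap\bigl(\{i\},\{j\};(S_i,S_j)\bigr).
\]
Next, apply \textbf{TCR} with the split $T_1 = S_i$, $T_2 = S_j$ on the rewritten target:
\[
I_\cap\bigl(\{i\},\{j\};(S_i,S_j)\bigr) \;=\; I_\cap(\{i\},\{j\};S_i) \;+\; I_\cap(\{i\},\{j\};S_j \mid S_i).
\]
By Corollary~\ref{cor:mi_bounds} (the pairwise-redundancy bound that follows from \textbf{LP}), the first summand is at most $\min\{I(S_i;S_i),I(S_j;S_i)\}=0$ because $S_i$ and $S_j$ are independent, and the second is at most $\min\{I(S_i;S_j\mid S_i),I(S_j;S_j\mid S_i)\}=0$ because $I(S_i;S_j\mid S_i)=0$ trivially (conditioning on $S_i$ makes it deterministic). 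Combined with \textbf{LP}, both summands equal zero exactly, so $I_\cap(\{i\},\{j\};T)=0$ for every pair $\{i,j\}$, contradicting Lemma~\ref{lem:LP_pw_red}.

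The step requiring the most care is the invocation of \textbf{REI}: one must check that $T\mapsto (S_i,S_j)$ is genuinely an invertible transformation on $\operatorname{supp}(T)$ (which the XOR structure makes clean) and that the definition of \textbf{REI} permits re-encoding the target alone while leaving the sources untouched. Once this reduction is in place, the rest is a mechanical deployment of \textbf{TCR} together with the conditional and unconditional pairwise bounds from Corollary~\ref{cor:mi_bounds}; no case analysis is needed since the three pairs are handled by the same argument with $(i,j)$ cycling through $(1,2),(1,3),(2,3)$. The overall conceptual punchline is that \textbf{REI} collapses the redundancy built into the target by the copy structure, \textbf{TCR} then separates the two-dimensional target into coordinates, and \textbf{LP} (via lattice monotonicity) kills each coordinate's contribution using pairwise independence.
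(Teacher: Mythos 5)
Your proof is correct and follows essentially the same route as the paper's: Lemma~\ref{lem:LP_pw_red} for positivity, \textbf{REI} to rewrite the target as $(S_i,S_j)$, \textbf{TCR} to split it, and Corollary~\ref{cor:mi_bounds} with pairwise independence to kill both terms. The only (immaterial) difference is that you split the target as $T_1=S_i$, $T_2=S_j$ whereas the paper conditions in the opposite order; the argument is symmetric.
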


\begin{proof} Assume for contradiction that such a PID exists and let $S_1,S_2,S_3, T$ satisfy the XOR-Source-Copy Gate. By Lemma \ref{lem:LP_pw_red}, at least one pairwise redundancy $I_\cap(\{i\},\{j\};T)$ must be strictly positive:
\begin{equation}\label{eq:LP_pwr_2}
I_\cap(\{i\},\{j\};T) > 0    \quad \text{for some } i\neq j.
\end{equation}
Now, we show that under \textbf{TCR}, \textbf{REI}, and \textbf{LP} they must all vanish.

We have $I(S_i;T) = 1$ for each $i \in \{1,2,3\}$ and $I(S_1,S_2,S_3;T) = 2$. By the same argument as in the proof of Theorem \ref{thm:id_lp_rei_incomp} the target is a bijective re-encoding of $(S_i,S_j)$ for any $i\neq j$ and hence by \textbf{REI} we have
\begin{equation}\label{eq:rei}
I_\cap(\{S_i\},\{S_j\};T)=I_\cap(\{S_i\},\{S_j\};(S_i,S_j)) \quad \text{for all } i\neq j.
\end{equation}
Let $i\neq j$ and apply \textbf{TCR} for redundant information to $(S_i,S_j)$:
\begin{equation}\label{eq:tcr}
I_\cap(\{S_i\},\{S_j\};(S_i,S_j))
= I_\cap(\{S_i\},\{S_j\};S_j) + I_\cap(\{S_i\},\{S_j\};S_i\,|\,S_j).
\end{equation}

We now show that both terms on the right-hand side are zero.

\emph{First term.} By \textbf{LP}, redundancy
is upper-bounded by the minimum of the corresponding marginal mutual information terms (see Corollary \ref{cor:mi_bounds}):
\begin{equation}
    I_\cap(\{S_i\},\{S_j\};S_j)\ \le\ \min\{\,I(S_i;S_j),\,I(S_j;S_j)\,\}.
\end{equation}
In the XOR-source-copy-gate $S_i\perp S_j$, hence $I(S_i;S_j)=0$ and therefore
\begin{equation}
    I_\cap(\{S_i\},\{S_j\};S_j)=0.
\end{equation}

\emph{Second term.} Again by \textbf{LP}, we have the same kind of bound for the conditional redundancy term (see Corollary \ref{cor:mi_bounds})
\begin{equation}
    I_\cap(\{S_i\},\{S_j\};S_i|S_j)\ \le\ \min\{\,I(S_i;S_i|S_j),\,I(S_j;S_i|S_j)\,\}.
\end{equation}
But in general $I(S_j;S_i|S_j)=0$, hence
\begin{equation}
\label{eq: TCR upper}
    I_\cap(\{S_i\},\{S_j\};S_i|S_j) = 0 \quad \text{for all } i\neq j.
\end{equation}
which establishes the claim and yields a contradiction with \eqref{eq:LP_pwr_2}, thus establishing the inconsistency.
\end{proof}
\begin{cor}
The inconsistency still holds when \textbf{TCR} is replaced by any of its equivalent formulations in Lemma \ref{lem:equiv_tcr}.
\end{cor}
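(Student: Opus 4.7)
The plan is to reduce the corollary directly to Theorem \ref{thm:lp_tcr_incompatible} via Lemma \ref{lem:equiv_tcr}, which establishes that \textbf{TCR} for redundant information is equivalent to the atom-level target chain rule \textbf{A-TCR} and to \textbf{C-TCR} for every PID-inducing $\mathcal{C}$-information (in particular \textbf{U-TCR} for union information, \textbf{S-TCR} for weak synergy, and \textbf{V-TCR} for vulnerable information). Since these formulations all refer to \emph{the same} underlying family of information atoms $\{\Pi(f;T)\}_{f \in \mathcal{B}_n}$, any one of them can be translated into any other without changing the PID under consideration.

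Concretely, suppose for contradiction that a PID $\{\Pi(f;T)\}_{f\in\mathcal{B}_n}$ exists which satisfies \textbf{LP}, \textbf{REI}, and one of the equivalent formulations of \textbf{TCR}—say \textbf{C-TCR} for some PID-inducing concept $\mathcal{C}$ (the argument for \textbf{A-TCR} is identical). Then the associated measure of redundant information $I_\cap$, obtained from the same atoms via Definition \ref{def:redundancy}, satisfies \textbf{TCR} by Lemma \ref{lem:equiv_tcr}. Note that \textbf{LP} is already formulated at the level of atoms and \textbf{REI} is defined uniformly for \emph{any} associated PID-inducing measure, so both properties carry over to $I_\cap$ without additional work. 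Consequently, the PID satisfies \textbf{LP}, \textbf{TCR}, and \textbf{REI} simultaneously, contradicting Theorem \ref{thm:lp_tcr_incompatible}.

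There is essentially no obstacle in the corollary itself: the substantive content is packaged into Lemma \ref{lem:equiv_tcr} (proved in Appendix \ref{app:equiv_tcr}) and into the XOR-Source-Copy analysis of Theorem \ref{thm:lp_tcr_incompatible}. The only point worth emphasizing is consistency of perspective—one must use the same underlying atoms when translating between $\mathcal{C}$-information, redundant information, and atom-level quantities—so that a chain rule satisfied at one level genuinely implies the chain rule at the others. Once this is in place, the corollary follows by a single application of Lemma \ref{lem:equiv_tcr} followed by Theorem \ref{thm:lp_tcr_incompatible}.
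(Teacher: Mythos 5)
Your proposal is correct and matches the paper's (implicit) argument: the corollary is a direct consequence of Lemma \ref{lem:equiv_tcr}, which shows the alternative chain rules are equivalent to \textbf{TCR} for the same underlying atoms, so Theorem \ref{thm:lp_tcr_incompatible} applies unchanged. Your added remark that \textbf{LP} and \textbf{REI} are already formulated at the atom level (and hence transfer automatically) is exactly the right point to make explicit.
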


\section{Discussion} \label{sec:discussion}

Our main results demonstrate that certain intuitively appealing properties for Partial Information Decomposition cannot be simultaneously satisfied. We have shown that local positivity is incompatible with the combination of the target chain rule  and re-encoding invariance, and also with the combination of the identity property and re-encoding invariance. These inconsistencies reflect deep structural tensions between properties that each appear natural when considered in isolation, and which form the backbone of classical information theory. Table~\ref{tab:measures_comparison} shows how prominent PID proposals in the literature navigate the trade-offs between the properties \textbf{LP}, \textbf{TCR}, \textbf{REI}, and \textbf{ID}.

 \begin{table}[h!]
     \centering
     \begin{tabular}{c|c|c|c|c|c}
       Measure   & LP & TCR & REI & ID & n > 2\\\hline
        $I_{min}$ \cite{williams2010nonnegative} 
         & \checkmark & X & \checkmark & X &\checkmark \\
        $I_\cap^{sx}$ \cite{makkeh2021introducing}& 
         X & \checkmark & \checkmark &X & \checkmark \\
         BROJA\cite{Bertschinger2013_shared_info}& \checkmark & X & \checkmark &\checkmark  & X\\
         $R_{min}$ \cite{finn2018pointwise}& X &\checkmark & \checkmark& X & \checkmark \\
         $I_{red}$\cite{Harder2013}& \checkmark & X & \checkmark& \checkmark&X \\
         $I^{\prec}_\cup$\cite{Kolchinsky2022}& X & X & \checkmark & \checkmark & \checkmark\\
     \end{tabular}
    
     \caption{Overview of which properties are satisfied by some prominent PID measures proposed in the literature. A checkmark indicates that a measure satisfies the corresponding property, and a cross indicates that it does not. The entries are based on results reported in the original publications, on our inconsistency theorems, and on the result by \cite{finn2018pointwise}, which shows that for $n = 2$, no measure can simultaneously satisfy all four properties \textbf{REI}, \textbf{ID}, \textbf{LP}, and \textbf{TCR}. From this result it follows that the BROJA \cite{bertschinger2014quantifying} and $I_{red}$ \cite{Harder2013} measures, which satisfy \textbf{REI}, \textbf{ID}, and \textbf{LP}, cannot satisfy \textbf{TCR}. Due to Theorem  \ref{thm:lp_tcr_incompatible}, $I_{min}$ cannot satisfy \textbf{TCR} because it satisfies \textbf{LP} and \textbf{REI}. The measure $I^{\prec}_\cup$ denotes the Blackwell-union measure \cite{Kolchinsky2022}, which coincides with BROJA for $n=2$ and therefore also cannot satisfy \textbf{TCR}. Due to Theorem \ref{thm:id_lp_rei_incomp} it cannot satisfy \textbf{LP} (beyond $n=2$) because it satisfies \textbf{ID} and \textbf{REI}. Finally, since all measures considered here are discrete and defined purely as functionals of the joint probability distribution, they automatically satisfy \textbf{REI}.}
     \label{tab:measures_comparison}
 \end{table}

Before discussing the implications of these results, we note several important limitations regarding their scope. Our proofs rely on a counterexample involving three sources ($n=3$), which establishes that no general PID can satisfy the incompatible property combinations for all $n \in \mathbb{N}$ and all input distributions, as stated in our theorems. However, this does not preclude the possibility that measures might satisfy these properties in restricted settings. For instance, the BROJA measure \cite{bertschinger2014quantifying} satisfies \textbf{REI}, \textbf{LP}, and \textbf{ID} simultaneously for the special case $n=2$. Similarly, for $n \neq 2$, measures may satisfy all properties for certain special input distributions, even though our results show they cannot do so universally. Furthermore, our analysis is confined to discrete random variables, as our counterexample (the XOR-Source-Copy Gate) is discrete. While this suffices to rule out measures satisfying the incompatible properties for all discrete distributions, and hence also in complete generality (covering both discrete and continuous cases), it leaves open whether measures restricted solely to continuous distributions might avoid these inconsistencies. These limitations notwithstanding, our results establish fundamental barriers to any PID framework that aspires to full generality.

Since the properties in each incompatible set cannot be preserved at once in the general case, we are forced to make a choice: at least one must be abandoned or relaxed. In what follows, we briefly consider each property in turn, examining the case for retaining or rejecting it.

\paragraph{Re-encoding Invariance.}
Among the properties considered in our inconsistency results, re-encoding invariance (REI) appears to be the least negotiable. In information theory, the very notion of “information” is grounded in the idea of \textit{uncertainty reduction} \cite{shannon1948mathematical}.  This reduction should depend only on the abstract probabilistic structure of the situation—on how many possibilities there are and how likely each is—not on how those possibilities are represented. Whether we label four equiprobable outcomes as ${0,1,2,3}$, as binary strings ${00,01,10,11}$, or as one-hot vectors makes no difference to the uncertainty we face before observing the outcome or to the certainty we gain afterwards. Likewise, our uncertainty about tomorrow’s temperature should not depend on whether it is measured in Celsius or Fahrenheit: it would be incoherent to be quite certain that it will be a warm, sunny day when expressed in one unit, yet highly uncertain when expressed in the other, unsure whether to bring a sunhat or a winter coat. In this sense, re-encoding invariance follows directly from the fundamental interpretation of information as an observer’s uncertainty reduction.

 In the discrete setting this invariance is, in fact, essentially guaranteed once information quantities are defined as functionals of the probability distribution, as they always are in information theory. A bijective relabeling merely changes the symbols to which the probability values are attached, leaving their probabilistic relationships unchanged. Hence, any measure that depends only on the distribution of probabilities—and not on additional algebraic or geometric structure of the outcome space—will yield identical results across re-encodings. Non-invariance arises only when such additional structure is invoked, for example, when outcomes are treated as numbers that can be added, ordered, or compared by distance. Quantities like the mean, median, or variance exemplify this: rescaling variables alters their values, because they depend on the particular numerical encoding, not just on the underlying probabilities.

A useful analogy can be drawn with principal component analysis (PCA). PCA is often interpreted as identifying the directions of greatest “uncertainty” or “information” in a dataset: those along which the data vary most. However, PCA is based on variance, which is not re-encoding invariant. As a result, the apparent “uncertainty” revealed by PCA depends strongly on the units and characteristic scale of the variables. A variable measured in large numerical units, or one that naturally exhibits higher variance, will dominate the principal components even if it is not intrinsically more informative. To obtain meaningful results, one must employ workarounds that seek to make variables more comparable such as measuring them in the same units and standardizing them. This example shows how failure of transformation invariance can distort interpretations couched in terms of uncertainty.

It is worth adding a few brief remarks on how re-encoding invariance manifests in the continuous setting. When variables are continuous and related by smooth invertible re-encodings, the corresponding probability densities transform by the determinant of the Jacobian. To remain invariant, an information measure must cancel out this induced change in the density. Differential entropy is the prime counterexample: it fails to do so and is therefore not re-encoding invariant. Jaynes \cite{jaynes2003probability} already emphasized that this non-invariance undermines the use of the standard differential entropy expression as a straightforward continuous analogue of Shannon’s discrete entropy, and argued that an invariant formulation requires introducing an appropriate reference measure $m(x)$ that transforms along with the density. Mutual information, by contrast, is known to be invariant under all smooth invertible re-encodings \cite{kraskov2004estimating}. The requirement of invariance has also been proposed for measures of \textit{specific information}—the information conveyed by a particular realization of one variable about another—and it has been shown that this condition uniquely determines the appropriate functional form \cite{kostal2018coordinate}.

\paragraph{Target Chain Rule}
The chain rule of mutual information is fundamental to classical information theory. In Fano's influential exposition of information theory~\cite{Fano1961}, it is even taken as an axiom. Carrying it over to the PID realm therefore seems desirable, as it allows us to break reasoning about composite targets into individual steps. A similar idea is already contained in Shannon's grouping axiom for entropy which requires that the uncertainty about a set of outcomes can be written as the uncertainty about which group an outcome belongs to, plus the remaining uncertainty about the outcome once the group is known \cite{shannon1948mathematical}. The chain rule is also a powerful technical tool: most theoretical derivations involving mutual information rely on the chain rule at some point.

On the other hand, several widely used and well-motivated PID proposals violate \textbf{TCR}. For instance, the BROJA measure \cite{bertschinger2014quantifying} has a plausible decision-theoretic interpretation yet does not satisfy \textbf{TCR}. This suggests a basis for rejecting \textbf{TCR}: we may prioritize a measure with a particular operational interpretation over satisfaction of structural properties. When the operational interpretation is sufficiently compelling, violation of \textbf{TCR} may be an acceptable price to pay.

\paragraph{Identity Property.} The identity property stands on somewhat different footing than the other properties we have considered. Unlike \textbf{TCR}, \textbf{LP}, or \textbf{REI}, it does not generalize a fundamental property of mutual information to the PID setting. Rather, it was introduced by Harder et al.~\cite{Harder2013} to address what they saw as a pathological behavior of the $I_{\min}$ measure proposed by Williams and Beer~\cite{williams2010nonnegative}:
\begin{equation}
I_{\min}(\mathbf{a}_1,\ldots,\mathbf{a}_m;T) := \sum_{t} p(t)\, \min_{i} I(\mathbf{a}_i;t),
\end{equation}
where $I(\mathbf{a}_i;t)$ is the \textit{specific information} that source collection $\mathbf{a}_i$ provides about the realization $T=t$:
\begin{equation}
I(\mathbf{a}_i;t) := \sum_{\substack{(s_j)_{j\in \mathbf{a}_i}}}
p\bigl((s_j)_{j\in \mathbf{a}_i} \mid t\bigr)\,
\log \frac{p\bigl(t \mid (s_j)_{j\in \mathbf{a}_i}\bigr)}{p(t)}.
\end{equation}

The motivating example is instructive: let $S_1$ and $S_2$ be independent fair coin flips, and $T=(S_1,S_2)$. Harder et al. argued that the two sources provide entirely distinct information: one reveals only the first component of the target, the other only the second component. Yet $I_{\min}$ assigns 1 bit of redundancy because each source provides the same \emph{amount} of specific information about each target realization (namely, 1 bit). According to Harder et al., this shows that $I_{\min}$ conflates "sameness of amount" with "sameness of information." A measure that adequately accounts for information \emph{content} rather than merely its amount should therefore satisfy the identity property, assigning zero redundancy when sources are independent.

However, this argument is problematic. Whether there is truly zero redundancy in this scenario depends on how one conceptualizes redundant information. Bertschinger et al. \cite{Bertschinger2013_shared_info} argued that if redundancy is understood in terms of \textit{shared knowledge}  in the sense of multi-agent epistemic game theory \cite{fagin2004reasoning} , then the identity property need not hold. Indeed, the shared exclusions measure of redundancy \cite{makkeh2021introducing}, $I^{sx}$, quantifies shared knowledge in this sense, which explains why it violates ID while maintaining an operational interpretation as "the information about the target provided by the shared knowledge of multiple agents" (this is shown formally in \cite{gutknecht2023information}).

This suggests that while $I_{\min}$ does indeed conflate sameness of amount with sameness of information (this is literally what its definition entails, assigning redundancy whenever specific information values coincide) one can nevertheless violate the identity property without making this conflation. In that sense, the term "identity property" is somewhat of a misnomer: it is arguably neither necessary nor sufficient for doing justice to the distinction between informational amount and content (i.e. identity). 

Overall, since \textbf{ID} addresses behavior in a special case rather than generalizing any core property of information theory, and since there are reasonable arguments that it need not hold for certain plausible conceptions of redundancy, it should not be demanded as a \textit{universal} requirement for acceptable PID measures. Rather, there may exist legitimate conceptualizations of redundancy where it holds (e.g.,~\cite{bertschinger2014quantifying, Harder2013}) and others where it does not (e.g.,~\cite{makkeh2021introducing}). Those that do satisfy \textbf{ID} will of course inevitable have to pay the price of not satisfying \textbf{LP} for $n\neq2$ (given that we accept \textbf{REI}, which however holds by construction in discrete case as explained above).

\paragraph{Local Positivity.}
Non-negativity of information is widely considered desirable, if not essential, in information theory. Certain information measures have been criticized precisely because they can take negative values, interaction information being a prominent example. The requirement also makes intuitive sense if we think of information as quantifying the \textit{strength of statistical dependence}, independent of its nature or direction. In the extreme case where variables are independent, information vanishes; it does not seem meaningful to speak of negative dependence strength.

However, negative values might be acceptable if they can be given a reasonable interpretation, most naturally, as \textit{misinformation}. At the \textit{pointwise} level of individual realizations \cite{Fano1961}, there is indeed a clear interpretation along these lines: negative pointwise information 
\begin{equation}
i(s;t) := \log \frac{p(t|s)}{p(t)}
\end{equation}
means that an observer who has observed $S=s$ is less likely to guess the correct target value $t$ when conditioning on this observation (using the posterior distribution of $T$ given $S=s$) than when relying solely on their prior knowledge $p(t)$. In this case, the observation actively misleads.

Now, for mutual information this cannot be the case \textit{on average} over all observations. But the question is whether this might still reasonably happen for PID quantities. One possible line of argument can be developed through the $I^{sx}$ measure \cite{makkeh2021introducing}. As mentioned above $I^{sx}$ conceptualizes redundancy in terms of shared knowledge~\cite{gutknecht2023information}, in particular in terms of the pointwise mutual information this shared knowledge provides about the target realization. It is simple to show that relying on this shared knowledge to guess target realizations can indeed be misleading in the same sense explained above, even on average. Furthermore, $I^{sx}$ can be decomposed into informative and misinformative parts~\cite{makkeh2021introducing, finn2018probability}, each of which is non-negative, as is the resulting atom-level decomposition. In this way the negativity is completely explained in terms of this possibility of misinformation. 

This suggests that negative atoms might be acceptable, but the burden of proof lies with any PID proposal that allows them: one must demonstrate that negative values admit a coherent interpretation, for instance as misinformation in the sense just described.

\paragraph{Rejecting the consistency equations?} Instead of rejecting one of the properties \textbf{ID}, \textbf{REI}, \textbf{TCR}, or \textbf{LP} one might consider a more radical response: changing the very definition of PID by rejecting the consistency equations \eqref{eq:consistency}. The approach to PID suggested by Kolchinsky \cite{Kolchinsky2022} does precisely this, by allowing the information atoms not to sum to the joint mutual information. While this formally sidesteps the inconsistencies we have established, it creates a fundamental problem: the result is no longer a \emph{decomposition} of mutual information. 

More generally, violations of the consistency equations undermine the \textit{interpretability} of PID components. The consistency equations embody the minimal requirements for interpreting the components as measures of unique, redundant, and synergistic information. Consider the case $n=2$, where the consistency equations demand that redundancy and unique information from source $S_1$ sum to the mutual information $I(S_1;T)$:
\begin{equation}
\Pi(\{1\}\{2\}) + \Pi(\{1\}) = I(S_1;T).
\end{equation}
This relationship is necessary if these components are to deserve their names. They partition the mutual information in a \textit{logically exclusive and exhaustive way}: information in $S_1$ about $T$ can be either also contained in $S_2$ (redundancy) or not (unique to $S_1$). Together, these components should therefore account for the full mutual information $I(S_1;T)$. Accordingly, if this equation is violated, we cannot meaningfully interpret $\Pi(\{1\}\{2\})$ as redundancy or $\Pi(\{1\})$ as unique information (again, if we could, they would have to add up). 

The same logic applies to all consistency equations: the information atoms $\Pi(f)$ with $f(\mathbf{a})=1$ are intended to partition the mutual information term $I(\mathbf{a};T)$ according to which source collections do or do not provide that information (as specified by $f$). So together they must account for the full $I(\mathbf{a};T)$.  Therefore, if the consistency equations are violated, at least one atom cannot have its intended interpretation (if they all did, they should add up). For these reasons, abandoning the consistency equations comes at an unacceptable cost: the result may not be a decomposition of mutual information at all, and we certainly lose the interpretability in terms of unique, redundant, and synergistic information, or more generally as information standing in specific parthood relationships to mutual information terms.

\section{Conclusion}
In this work, we have identified a fundamental incompatibility that arises in Partial Information Decomposition between several principles widely regarded as foundational within classical information theory: non-negativity of information, it's invariance under re-encoding, and the existence of a chain rule capturing the idea that information about a multivariate target can be decomposed into information about its individual components. Our results show that these principles cannot, in general, be translated simultaneously into the domain of PID. We also show that a closely related inconsistency persists when the chain rule is replaced by the PID-specific identity property, strengthening earlier results in the literature. Overall, our results help to identify which disagreements in the PID literature reflect genuine conceptual trade-offs rather than incomplete constructions, thereby bringing clarity to a number of long-standing debates. 

Against this background, one possible reaction is to declare the PID programme untenable: if the fundamental properties of classical information theory discussed here cannot be jointly preserved, then perhaps one should abandon PID altogether. This is certainly an understandable position, and in that case researchers retain a rich toolbox of alternative multivariate measures—such as co-information or the redundancy–synergy index—that capture aspects of redundancy and synergy without aiming for a full decomposition and which can be computed entirely within classical information theory. Nevertheless, we believe that this conclusion would be premature and ultimately too strong.

First, as discussed in the previous section, the assumptions entering our inconsistency results need not be regarded as non-negotiable. In particular, there are principled reasons to relax or reject the target chain rule, local positivity, or the identity property. Second, PID has already demonstrated considerable practical value, especially in machine learning \cite{shwartz2024compress, makkeh2025general, wollstadt2023rigorous, liang2023quantifying, schneider2025neuron, dewan2024diffusion}. For instance, recent work on \textit{infomorphic neural networks} \cite{makkeh2025general} shows that PID quantities can be used to define local learning rules, where neurons optimise PID-based objective functions (e.g., promoting redundancy). The resulting networks learn effectively across supervised, unsupervised, and memory tasks. The success of these models shows that PID quantities can function operationally even if their interpretation as valid measures of unique, redundant, or synergistic components is contested, for example due to violations of local positivity.

Third, our negative results concern the fully general multivariate case. They do not preclude the existence of PID formalisms that satisfy all desiderata in restricted settings, most notably for the case of two source variables. Indeed, \textbf{REI}, \textbf{LP}, and \textbf{ID} are jointly satisfiable for $n = 2$, as demonstrated by the BROJA measure, and whether \textbf{REI}, \textbf{LP}, and \textbf{TCR} can also co-exist in this setting remains open. In principle, any multivariate PID problem can be reduced to a two-source setting by grouping variables. Although this restriction reduces granularity, it still goes beyond what is accessible within classical information theory and may already suffice for a wide range of applications, including those considered in the context of infomorphic networks.

\section{Appendix}

\subsection{Proof of Lemma \ref{lem:lp_lm}} \label{app:lp_lm}
\begin{proof}
Redundancy can be expressed as a downwards sum of atoms over the redundancy lattice
\cite[Prop.~2.11]{gutknecht2025babel}:
\begin{equation}
I_\cap(\alpha;T) = \sum_{\alpha \preceq  \beta} \Pi(\alpha)
\end{equation}
If $\alpha \preceq \beta$, then the set of atoms summed over for $\alpha$ is contained in that for $\beta$, so $I_\cap(\beta;T)$ includes all atoms of $I_\cap(\alpha;T)$ plus possibly more. Under LP these atoms are nonnegative, hence $I_\cap(\alpha;T)\le I_\cap(\beta;T)$.  By definition, conditional redundancy is the same functional applied to each conditional distribution given $Z=z$
and then averaged over $p(z)$. Since \textbf{LP} requires nonnegativity of atoms for \textit{any} distribution, this is true for all conditional distributions given $Z=z$ as well, and averaging preserves the inequality.
\end{proof}

\subsection{Proof of Lemma \ref{lem:equiv_tcr}}\label{app:equiv_tcr}
\begin{proof}
\noindent\emph{(i) $\Rightarrow$ (ii).} Assume $I_\cap$ satisfies \textbf{TCR}. $I_\cap$  is related to the induced information atoms via an invertible system of equations
\begin{equation}
I_\cap(\alpha;T) \;=\; \sum_{\mathcal{C}_\mathrm{red}(\alpha;f)} \Pi(f;T) \quad \text{for all } \alpha \in \mathcal{A}_n,
\end{equation}
meaning that there exist coefficients $K(\beta;f)$ such that conversely
\begin{equation}
\Pi(f;T) \;=\; \sum_{\beta \in \mathcal{A}_n} K(f;\beta)\, I_\cap(\beta;T) \quad \text{for all } f \in \mathcal{B}_n.
\end{equation}
By assumption of \textbf{TCR} this yields
\begin{equation}
\Pi(f;T) \;=\; \sum_{\beta \in \mathcal{A}_n} K(f;\beta)\, I_\cap(\beta;T_1) + \sum_{\beta \in \mathcal{A}_n} K(f;\beta)\, I_\cap(\beta;T_2|T_1)
\end{equation}
And hence,
\begin{equation}
\Pi(f;T) \;=\; \Pi(f;T_1) + \Pi(f;T_2|T1)
\end{equation}
which simply follows from the definitions of conditional atoms and conditional redundancy.

\noindent\emph{(ii) $\Rightarrow$ (iii).} Assume the atoms induced by $I_\cap$ satisfy a target chain rule, and let $I^\mathcal{C}$ be any measure of $\mathcal{C}$-information. Then, we have
\begin{equation}
I^\mathcal{C}(\mathbf{a}_1,\ldots,\mathbf{a}_m;T) = \sum_{\substack{f\in\mathcal{B}_n \\ \mathcal{C}(\mathbf{a}_1,\ldots,\mathbf{a}_m;f)}} \Pi(f;T).
\end{equation}
And by the atom-level chain rule,
\begin{equation}
I^\mathcal{C}(\mathbf{a}_1,\ldots,\mathbf{a}_m;T) = \sum_{\substack{f\in\mathcal{B}_n \\ \mathcal{C}(\mathbf{a}_1,\ldots,\mathbf{a}_m;f)}} \Pi(f;T_1) + \sum_{\substack{f\in\mathcal{B}_n \\ \mathcal{C}(\mathbf{a}_1,\ldots,\mathbf{a}_m;f)}} \Pi(f;T_2|T_1), 
\end{equation}
which yields a target chain rule for $I^\mathcal{C}$:
\begin{equation}
I^\mathcal{C}(\mathbf{a}_1,\ldots,\mathbf{a}_m;T) = I^\mathcal{C}(\mathbf{a}_1,\ldots,\mathbf{a}_m;T_1) + I^\mathcal{C}(\mathbf{a}_1,\ldots,\mathbf{a}_m;T_2|T_1).
\end{equation}

\noindent\emph{(iii) $\Rightarrow$ (i).} This step is immediate because $I_\cap$ is itself a measure of $\mathcal{C}$-information for the condition $\mathcal{C}_\mathrm{red}$.

\end{proof}

\subsection{Proof of Lemma \ref{lem:equiv_id}} \label{app:equiv_id}
\begin{proof}
We establish the equivalences by showing (i) $\Rightarrow$ (ii) $\Rightarrow$ (iii) $\Rightarrow$ (iv) $\Rightarrow$ (i), using the fundamental relationships between different PID-inducing concepts established in \cite{gutknecht2025babel}.

\noindent\textbf{(i) $\Rightarrow$ (ii):}
Assume \textbf{ID} holds, i.e., $I_\cap(\{1\},\{2\};(S_1,S_2)) = I(S_1;S_2)$.

By the inclusion-exclusion principle for redundancy and union information (Proposition 2 in the electronic supplementary material of \cite{gutknecht2025babel}), we have
\begin{equation}
I_\cap(\{1\},\{2\};T) + I_\cup(\{1\},\{2\};T) = I(S_1;T) + I(S_2;T).
\end{equation}

For $T=(S_1,S_2)$, we have $I(S_1;T) = H(S_1)$ and $I(S_2;T) = H(S_2)$. Therefore:
\begin{align}
I_\cup(\{1\},\{2\};T) &= H(S_1) + H(S_2) - I_\cap(\{1\},\{2\};T) \\
&= H(S_1) + H(S_2) - I(S_1;S_2) \\
&= H(S_1,S_2).
\end{align}

\noindent\textbf{(ii) $\Rightarrow$ (iii):}
Assume $I_\cup(\{1\},\{2\};T) = H(S_1,S_2)$.

By the complementation principle for union information and weak synergy (Proposition 3 in the electronic supplementary material of \cite{gutknecht2025babel}), we have
\begin{equation}
I_\cup(\{1\},\{2\};T) + I_{ws}(\{1\},\{2\};T) = I(S_1,S_2;T).
\end{equation}

For $T=(S_1,S_2)$, we have $I(S_1,S_2;T) = H(S_1,S_2)$. Therefore:
\begin{align}
I_{ws}(\{1\},\{2\};T) &= H(S_1,S_2) - I_\cup(\{1\},\{2\};T) \\
&= H(S_1,S_2) - H(S_1,S_2) \\
&= 0.
\end{align}

\noindent\textbf{(iii) $\Rightarrow$ (iv):}
Assume $I_{ws}(\{1\},\{2\};T) = 0$.

By the inclusion-exclusion principle for weak synergy and vulnerable information (Proposition 2 in the electronic supplementary material of \cite{gutknecht2025babel}), we have
\begin{equation}
I_{ws}(\{1\},\{2\};T) + I_{vul}(\{1\},\{2\};T) = I_{ws}(\{1\};T) + I_{ws}(\{2\};T).
\end{equation}

For $T=(S_1,S_2)$, by the self-synergy property we have $I_{ws}(\{1\};T) = I(S_2;T|S_1) = H(S_2|S_1)$ and $I_{ws}(\{2\};T) = I(S_1;T|S_2) = H(S_1|S_2)$. Therefore:
\begin{align}
I_{vul}(\{1\},\{2\};T) &= H(S_2|S_1) + H(S_1|S_2) - I_{ws}(\{1\},\{2\};T) \\
&= H(S_2|S_1) + H(S_1|S_2) - 0 \\
&= H(S_1|S_2) + H(S_2|S_1).
\end{align}
\noindent\textbf{(iv) $\Rightarrow$ (i):}
Assume $I_{vul}(\{1\},\{2\};T) = H(S_1|S_2) + H(S_2|S_1)$.

By the complementation principle for vulnerable and redundant information (Proposition 3 in the electronic supplementary material of \cite{gutknecht2025babel}), we have
\begin{equation}
I_{vul}(\{1\},\{2\};T) + I_\cap(\{1\},\{2\};T) = I(S_1,S_2;T).
\end{equation}

For $T=(S_1,S_2)$, we have $I(S_1,S_2;T) = H(S_1,S_2)$. Therefore:
\begin{align}
I_\cap(\{1\},\{2\};T) &= H(S_1,S_2) - I_{vul}(\{1\},\{2\};T) \\
&= H(S_1,S_2) - (H(S_1|S_2) + H(S_2|S_1)) \\
&= H(S_2) -  H(S_2|S_1) \\
&= I(S_1;S_2).
\end{align}
\end{proof}

\enlargethispage{20pt}


\vskip2pc

\bibliographystyle{RS} 

\bibliography{sample} 

@article{gutknecht2025babel,
  title={From Babel to Boole: the logical organization of information decompositions},
  author={Gutknecht, Aaron J and Makkeh, Abdullah and Wibral, Michael},
  journal={Proceedings of the Royal Society A},
  volume={481},
  number={2310},
  pages={20240174},
  year={2025},
  publisher={The Royal Society}
}

@article{gutknecht2021bits,
  title={Bits and pieces: Understanding information decomposition from part-whole relationships and formal logic},
  author={Gutknecht, Aaron J and Wibral, Michael and Makkeh, Abdullah},
  journal={Proceedings of the Royal Society A},
  volume={477},
  number={2251},
  pages={20210110},
  year={2021},
  publisher={The Royal Society Publishing}
}

@article{kraskov2004estimating,
  title={Estimating mutual information},
  author={Kraskov, Alexander and St{\"o}gbauer, Harald and Grassberger, Peter},
  journal={Physical Review E—Statistical, Nonlinear, and Soft Matter Physics},
  volume={69},
  number={6},
  pages={066138},
  year={2004},
  publisher={APS}
}

@article{dewan2024diffusion,
  title={Diffusion PID: Interpreting diffusion via partial information decomposition},
  author={Dewan, Shaurya and Zawar, Rushikesh and Saxena, Prakanshul and Chang, Yingshan and Luo, Andrew and Bisk, Yonatan},
  journal={Advances in Neural Information Processing Systems},
  volume={37},
  pages={2045--2079},
  year={2024}
}

@article{dutta2021fairness,
  title={Fairness under feature exemptions: Counterfactual and observational measures},
  author={Dutta, Sanghamitra and Venkatesh, Praveen and Mardziel, Piotr and Datta, Anupam and Grover, Pulkit},
  journal={IEEE Transactions on Information Theory},
  volume={67},
  number={10},
  pages={6675--6710},
  year={2021},
  publisher={IEEE}
}

@article{liang2023quantifying,
  title={Quantifying \& modeling multimodal interactions: An information decomposition framework},
  author={Liang, Paul Pu and Cheng, Yun and Fan, Xiang and Ling, Chun Kai and Nie, Suzanne and Chen, Richard and Deng, Zihao and Allen, Nicholas and Auerbach, Randy and Mahmood, Faisal and others},
  journal={Advances in Neural Information Processing Systems},
  volume={36},
  pages={27351--27393},
  year={2023}
}

@article{gutknecht2025shannon,
  title={Shannon invariants: A scalable approach to information decomposition},
  author={Gutknecht, Aaron J and Rosas, Fernando E and Ehrlich, David A and Makkeh, Abdullah and Mediano, Pedro AM and Wibral, Michael},
  journal={arXiv preprint arXiv:2504.15779},
  year={2025}
}

@book{Arrow1951,
	address = {New York, NY, USA},
	author = {Kenneth Joseph Arrow},
	editor = {},
	publisher = {Wiley: New York},
	title = {Social Choice and Individual Values},
	year = {1951}
}

@book{jaynes2003probability,
  title={Probability theory: The logic of science},
  author={Jaynes, Edwin T},
  year={2003},
  publisher={Cambridge university press}
}

@article{shannon1948mathematical,
  title={A mathematical theory of communication},
  author={Shannon, Claude E},
  journal={The Bell system technical journal},
  volume={27},
  number={3},
  pages={379--423},
  year={1948},
  publisher={Nokia Bell Labs}
}

@article{finn2018probability,
  title={Probability mass exclusions and the directed components of mutual information},
  author={Finn, Conor and Lizier, Joseph T},
  journal={Entropy},
  volume={20},
  number={11},
  pages={826},
  year={2018},
  publisher={MDPI}
}

@phdthesis{gutknecht2023information,
  title  = {Information, Logic, and Inference in the Analysis of Complex Networks},
  author = {Gutknecht, Aaron Julian},
  year   = {2023},
  school = {Georg-August University G{\"o}ttingen},
  type   = {PhD thesis},
  url    = {https://ediss.uni-goettingen.de/handle/11858/15045}
}

@book{fagin2004reasoning,
  title={Reasoning about knowledge},
  author={Fagin, Ronald and Halpern, Joseph Y and Moses, Yoram and Vardi, Moshe},
  year={2004},
  publisher={MIT press}
}

@article{makkeh2021introducing,
  title={Introducing a differentiable measure of pointwise shared information},
  author={Makkeh, Abdullah and Gutknecht, Aaron J and Wibral, Michael},
  journal={Physical Review E},
  volume={103},
  number={3},
  pages={032149},
  year={2021},
  publisher={APS}
}

@article{finn2018pointwise,
  title={Pointwise partial information decompositionusing the specificity and ambiguity lattices},
  author={Finn, Conor and Lizier, Joseph T},
  journal={Entropy},
  volume={20},
  number={4},
  pages={297},
  year={2018},
  publisher={MDPI}
}

@article{algo_fairness_tradeoff,
author = {Kleinberg, Jon},
title = {Inherent Trade-Offs in Algorithmic Fairness},
year = {2018},
issue_date = {June 2018},
publisher = {Association for Computing Machinery},
address = {New York, NY, USA},
volume = {46},
number = {1},
issn = {0163-5999},
url = {https://doi.org/10.1145/3292040.3219634},
doi = {10.1145/3292040.3219634},
abstract = {Recent discussion in both the academic literature and the public sphere about classification by algorithms has involved tension between competing notions of what it means for such a classification to be fair to different groups. We consider several of the key fairness conditions that lie at the heart of these debates, and discuss recent research establishing inherent trade-offs between these conditions. We also consider a variety of methods for promoting fairness and related notions for classification and selection problems that involve sets rather than just individuals.},
journal = {SIGMETRICS Perform. Eval. Rev.},
month = jun,
pages = {40},
numpages = {1},
keywords = {calibration, algorithmic fairness}
}

@article{kostal2018coordinate,
  title={Coordinate invariance as a fundamental constraint on the form of stimulus-specific information measures},
  author={Kostal, Lubomir and D’Onofrio, Giuseppe},
  journal={Biological Cybernetics},
  volume={112},
  number={1},
  pages={13--23},
  year={2018},
  publisher={Springer}
}

@article{williams2010nonnegative,
  title={Nonnegative decomposition of multivariate information},
  author={Williams, Paul L and Beer, Randall D},
  journal={arXiv preprint arXiv:1004.2515},
  year={2010}
}

@article{Harder2013,
  author = {Harder, M. and Salge, C. and Polani, D.},
  title = {Bivariate measure of redundant information},
  journal = {Phys. Rev. E},
  volume = {87},
  number = {1},
  pages = {012130},
  year = {2013}
}

@article{ince2017measuring,
  author = {Ince, R. A. A.},
  title = {Measuring multivariate redundant information with pointwise common change in surprisal},
  journal = {Entropy},
  volume = {19},
  number = {7},
  pages = {318},
  year = {2017}
}

@article{Kolchinsky2022,
  author = {Kolchinsky, A.},
  title = {A novel approach to the partial information decomposition},
  journal = {Entropy},
  volume = {24},
  number = {3},
  pages = {403},
  year = {2022}
}

@article{bertschinger2014quantifying,
  author = {Bertschinger, N. and Rauh, J. and Olbrich, E. and Jost, J. and Ay, N.},
  title = {Quantifying unique information},
  journal = {Entropy},
  volume = {16},
  number = {4},
  pages = {2161--2183},
  year = {2014}
}

@article{james2018unique,
  author = {James, R. G. and Emenheiser, J. and Crutchfield, J. P.},
  title = {Unique information and secret key agreement},
  journal = {Entropy},
  volume = {21},
  number = {1},
  pages = {12},
  year = {2018}
}

@article{rosas2020operational,
  author = {Rosas, F. E. and Mediano, P. A. M. and Rassouli, B. and Barrett, A. B.},
  title = {An operational information decomposition via synergistic disclosure},
  journal = {J. Phys. A: Math. Theor.},
  volume = {53},
  number = {48},
  pages = {485001},
  year = {2020}
}

@inproceedings{rauh2014reconsidering,
  author = {Rauh, J. and Bertschinger, N. and Olbrich, E. and J{\"o}st, J.},
  title = {Reconsidering unique information: Towards a multivariate information decomposition},
  booktitle = {Proc. ISIT},
  pages = {2232--2236},
  year = {2014}
}

@incollection{Griffith2014,
  author = {Griffith, V. and Koch, C.},
  title = {Quantifying synergistic mutual information},
  booktitle = {Guided self-organization: inception},
  pages = {159--190},
  publisher = {Springer},
  year = {2014}
}

@book{Fano1961,
  author    = {Robert M. Fano},
  title     = {Transmission of Information: A Statistical Theory of Communications},
  publisher = {The M.I.T. Press},
  address   = {Cambridge, Massachusetts},
  year      = {1961},
  edition   = {3rd printing, May 1966},
  note      = {Second printing September 1963, with corrections},
  isbn      = {},
  lccn      = {61-8799}
}

@inproceedings{Bertschinger2013_shared_info,
  author = {Bertschinger, N. and Rauh, J. and Olbrich, E. and Jost, J.},
  title = {Shared information --- new insights and problems in decomposing information in complex systems},
  booktitle = {Springer Proceedings in Complexity},
  pages = {251--269},
  publisher = {Springer},
  year = {2013}
}

@phdthesis{williams2011information,
  author = {Williams, P. L.},
  title = {Information dynamics: Its theory and application to embodied cognitive systems},
  school = {Indiana University},
  year = {2011}
}

@article{rauh2023continuity,
  author = {Rauh, J. and Banerjee, P. K. and Olbrich, E. and Mont{\'u}far, G. and Jost, J.},
  title = {Continuity and additivity properties of information decompositions},
  journal = {Int. J. Approx. Reason.},
  volume = {161},
  pages = {108979},
  year = {2023}
}

@incollection{Hamman2023,
  author = {Hamman, F. and Dutta, S.},
  title = {Demystifying local and global fairness trade-offs in federated learning using information theory},
  booktitle = {Federated Learning and Analytics in Practice: Algorithms, Systems, Applications, and Opportunities},
  year = {2023}
}

@article{dutta2023review,
  author = {Dutta, S. and Hamman, F.},
  title = {A review of partial information decomposition in algorithmic fairness and explainability},
  journal = {Entropy},
  volume = {25},
  number = {5},
  pages = {795},
  year = {2023}
}

@article{wibral2017partial,
  author = {Wibral, M. and Priesemann, V. and Kay, J. W. and Lizier, J. T. and Phillips, W. A.},
  title = {Partial information decomposition as a unified approach to the specification of neural goal functions},
  journal = {Brain Cogn.},
  volume = {112},
  pages = {25--38},
  year = {2017}
}

@article{varley2023partial,
  author = {Varley, T. F. and Pope, M. and Grazia, M. and Joshua, J. and Sporns, O.},
  title = {Partial entropy decomposition reveals higher-order information structures in human brain activity},
  journal = {Proc. Natl. Acad. Sci. U.S.A.},
  volume = {120},
  number = {30},
  pages = {e2300888120},
  year = {2023}
}

@article{luppi2022synergistic,
  author = {Luppi, A. I. and Mediano, P. A. M. and Rosas, F. E. and Holland, N. and Fryer, T. D. and O'Brien, J. T. and Rowe, J. B. and Menon, D. K. and Bor, D. and Stamatakis, E. A.},
  title = {A synergistic core for human brain evolution and cognition},
  journal = {Nat. Neurosci.},
  volume = {25},
  number = {6},
  pages = {771--782},
  year = {2022}
}

@article{gomes2024measure,
  author = {Gomes, A. F. C. and Figueiredo, M. A. T.},
  title = {A measure of synergy based on union information},
  journal = {Entropy},
  volume = {26},
  number = {3},
  pages = {271},
  year = {2024}
}

@article{wollstadt2023rigorous,
  author = {Wollstadt, P. and Schmitt, S. and Wibral, M.},
  title = {A rigorous information-theoretic definition of redundancy and relevancy in feature selection based on (partial) information decomposition},
  journal = {J. Mach. Learn. Res.},
  volume = {24},
  number = {131},
  pages = {1--44},
  year = {2023}
}

@article{shwartz2024compress,
  author = {Shwartz-Ziv, R. and LeCun, Y.},
  title = {To compress or not to compress --- self-supervised learning and information theory: A review},
  journal = {Entropy},
  volume = {26},
  number = {3},
  pages = {252},
  year = {2024}
}

@article{vanenk2023quantum,
  author = {van Enk, S. J.},
  title = {Quantum partial information decomposition},
  journal = {Phys. Rev. A},
  volume = {108},
  number = {6},
  pages = {062415},
  year = {2023}
}

@article{makkeh2025general,
  title={A general framework for interpretable neural learning based on local information-theoretic goal functions},
  author={Makkeh, Abdullah and Graetz, Marcel and Schneider, Andreas C and Ehrlich, David A and Priesemann, Viola and Wibral, Michael},
  journal={Proceedings of the National Academy of Sciences},
  volume={122},
  number={10},
  pages={e2408125122},
  year={2025},
  publisher={National Academy of Sciences}
}

@inproceedings{schneider2025neuron,
  author = {Schneider, A. C. and Neuhaus, V. and Ehrlich, D. A. and Makkeh, A. and Ecker, A. S. and Priesemann, V. and Wibral, M.},
  title = {What should a neuron aim for? Designing local objective functions based on information theory},
  booktitle = {Proc. 13th Int. Conf. on Learning Representations (ICLR)},
  year = {2025}
}

\end{document}